\definecolor{darkblue}{rgb}{0.0,0.0,0.3}
\setlist*[enumerate]{label=(\roman*)}
\def\boxit#1{\vbox{\hrule\hbox{\vrule\kern6pt
          \vbox{\kern6pt#1\kern6pt}\kern6pt\vrule}\hrule}}
\newcommand{\bpsi}{\boldsymbol{\psi}}
\newcommand{\ben}{\begin{enumerate}}
\newcommand{\een}{\end{enumerate}}
\newcommand{\beq}{\begin{equation}}
\newcommand{\eeq}{\end{equation}}
\newtheoremstyle{slplain}% name
  {1\baselineskip\@plus.2\baselineskip\@minus.2\baselineskip}% Space above
  {.5\baselineskip\@plus.2\baselineskip\@minus.2\baselineskip}% Space below
  {\slshape}% Body font
  {}%Indent amount (empty = no indent, \parindent = para indent)
  {\bfseries}%  Thm head font
  {.}%       Punctuation after thm head
  { }%      Space after thm head: " " = normal interword space;
\newtheorem{theorem}{Theorem}
\newtheorem{corollary}[theorem]{Corollary}
\newtheorem{lemma}[theorem]{Lemma}
\newtheorem{proposition}[theorem]{Proposition}
\begin{document}
\title{Spatial Dependencies in Item Response Theory: Gaussian Process Priors for Geographic and Cognitive Measurement}
\author{Mingya Huang\thanks{University of Wisconsin--Madison}\footnote{Corresponding author. Email: mhuang233@wisc.edu} \and Soham Ghosh\footnotemark[1]}

\date{\today{}}
\maketitle

\begin{abstract}
\noindent 
Measurement validity in Item Response Theory depends on appropriately modeling dependencies between items when these reflect meaningful theoretical structures rather than random measurement error. In ecological assessment, citizen scientists identifying species across geographic regions exhibit systematic spatial patterns in task difficulty due to environmental factors. Similarly, in Author Recognition Tests, literary knowledge organizes by genre, where familiarity with science fiction authors systematically predicts recognition of other science fiction authors. Current spatial Item Response Theory methods, represented by the 1PLUS, 2PLUS, and 3PLUS model family, address these dependencies but remain limited by (1) binary response restrictions, and (2) conditional autoregressive priors that impose rigid local correlation assumptions, preventing effective modeling of complex spatial relationships. Our proposed method, Spatial Gaussian Process Item Response Theory (SGP-IRT), addresses these limitations by replacing conditional autoregressive priors with flexible Gaussian process priors that adapt to complex dependency structures while maintaining principled uncertainty quantification. SGP-IRT accommodates polytomous responses and models spatial dependencies in both geographic and abstract cognitive spaces, where items cluster by theoretical constructs rather than physical proximity. Simulation studies demonstrate improved parameter recovery, particularly for item difficulty estimation. Empirical applications show enhanced recovery of meaningful difficulty surfaces and improved measurement precision across psychological, educational, and ecological research applications.
\end{abstract}

\begin{flushleft}
Keywords: Item response theory model, spatial dependency, Gaussian processes, measurement validity, latent variable modeling, large-scale data.
\end{flushleft}

\section{Introduction}
Valid test construction requires understanding when item dependencies reflect meaningful structures versus measurement problems that threaten predictive accuracy. Measurement across psychology, education, and ecological assessment relies on Item Response Theory (IRT), which assumes that once we account for individual differences in underlying ability traits, test items should function independently of one another \citep{Stout2002,Fox2010}. However, violations of this local independence assumption frequently emerge in real assessment settings, creating tension between model assumptions and substantive meaning \citep{Kang2024,Brucato2023}. Recent advances in latent space modeling have demonstrated that embedding both respondents and items in shared metric spaces can capture meaningful interactions that standard IRT models miss \citep{Jeon_2021}. These violations become particularly evident in measurement contexts where spatial or contextual factors create systematic dependencies among items that reflect genuine theoretical structures rather than measurement artifacts. Such dependencies arise when items cluster not randomly, but according to meaningful organizational principles—whether geographic, cognitive, or conceptual.

\subsection{Motivating Examples}
In ecological assessment, citizen scientists identifying species from camera trap images across geographic regions show classification difficulty that varies systematically with spatial location due to environmental factors and regional expertise differences. Figure \ref{fig:ecological_motivation} illustrates this phenomenon using Serengeti camera trap data, where raw difficulty values at individual sites (left panel, computed as proportion of incorrect species identifications) show considerable spatial variation. When spatial dependencies are properly modeled, the posterior difficulty surface (right panel) reveals a coherent east-west gradient reflecting genuine ecological factors rather than random measurement error. This spatial structure demonstrates why item dependencies should not automatically be treated as measurement problems. That is, the item difficulty gradient corresponds to real environmental differences where species identification involves different cognitive demands across geographic contexts.
\begin{figure}[htbp]
    \centering
    \includegraphics[width=\linewidth]{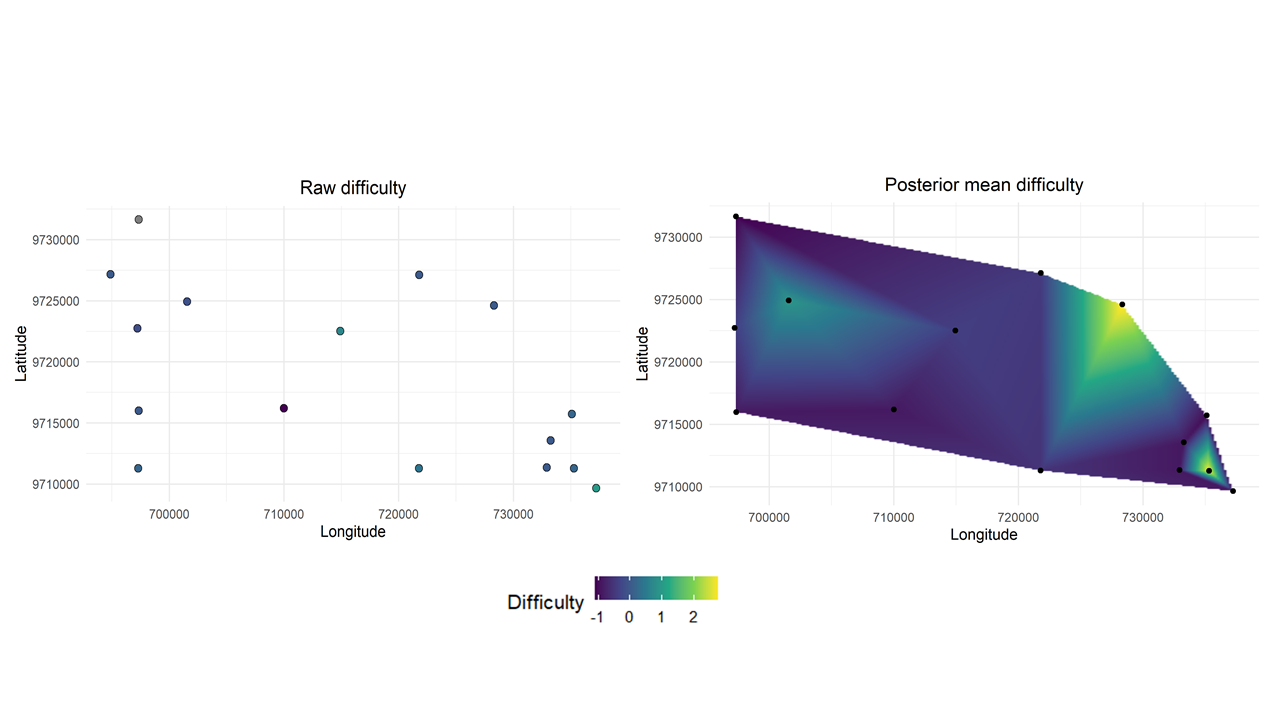}
    \vspace{-4em}
    \caption{Spatial patterns in species identification difficulty from Serengeti camera trap data. Left panel: Raw difficulty computed as proportion of incorrect species identifications at each camera location, showing considerable spatial variation across the study region. Right panel: Posterior difficulty surface estimated using our SGP-IRT model, revealing a coherent east-west gradient that reflects genuine environmental and ecological factors rather than random measurement error. The smooth spatial structure demonstrates meaningful item dependencies that should be modeled rather than eliminated.}
    \label{fig:ecological_motivation}
\end{figure}

A parallel challenge emerges in cognitive assessment domains where items cluster by theoretical constructs rather than geographic proximity. In Author Recognition Tests, participants' recognition patterns reveal systematic organization by literary genre—individuals who recognize science fiction authors like Ursula K. Le Guin show substantially higher recognition rates for other science fiction authors like Isaac Asimov, creating dependencies that reflect cognitive organization of literary knowledge \citep{StratovichART,Moore2015,Wimmer2023,acheson2008}. This recognition process involves multiple cognitive mechanisms including memory retrieval, knowledge organization by literary schemas, and confidence judgments about author familiarity, making the ART a measure of cognitive processing in the literary domain. 

Both examples highlight a critical question: how should measurement models handle dependencies that carry substantive meaning while maintaining the precision and interpretability advantages of IRT approaches? The majority of current psychometric models have treated such dependencies as problems requiring correction or elimination. However, this approach may discard valuable information about construct organization, individual differences, and contextual factors that could enhance rather than compromise measurement validity.

The challenges illustrated in our motivating examples require measurement approaches that can model meaningful dependencies rather than eliminate them. While various approaches exist for handling item dependencies, the most systematic framework for spatial and contextual dependency modeling comes from spatial IRT methods that explicitly incorporate spatial structure into item parameters. The spatial IRT family, primarily represented by the \texttt{1PLUS} (One-Parameter Logistic (U)sing (S)patially dependent item), \texttt{2PLUS} (Two-Parameter Logistic (U)sing (S)patially dependent item), and \texttt{3PLUS} (Three-Parameter Logistic (U)sing (S)patially dependent item) models \citep{Can_ado_2016, santos2020bayesian}, provides the main approach for modeling dependencies that arise from spatial, contextual, or conceptual proximity among items. The \texttt{1PLUS} model extends the Rasch model by incorporating spatial dependencies in item difficulty parameters. The \texttt{2PLUS} model adds discrimination parameters that vary across spatial contexts. The \texttt{3PLUS} model incorporates pseudo-guessing parameters, creating the most comprehensive spatial extension of standard IRT. However, these methods face two fundamental limitations that restrict their applicability to contemporary measurement challenges. First, they accommodate only binary responses, discarding valuable information from polytomous scales and rating formats that characterize modern assessment practice. Second, they rely on conditional autoregressive (CAR) priors that impose restrictive spatial assumptions: dependencies operate only between immediately adjacent locations, relationships are isotropic across all directions, and correlation structures cannot adapt to data patterns. These constraints prevent effective modeling of complex spatial patterns like the east-west gradient observed in our ecological example, where dependencies span long distances and vary systematically across geographic regions rather than operating only between neighboring sites.

\subsection{Our Contributions}
To address these limitations within the spatial IRT framework, we introduce a novel extension that reconceptualizes how spatial measurement models handle meaningful item dependencies. Our newly developed Spatial Gaussian Process Item Response Theory (\texttt{SGP-IRT}) represents a significant methodological innovation that advances beyond the \texttt{1PLUS}/\texttt{2PLUS}/\texttt{3PLUS} family by treating systematic dependencies as informative structures rather than problems to eliminate, enabling enhanced measurement precision and deeper understanding of constructs and individual differences.

Our proposed method introduces four key innovations to spatial IRT methodology. First, we develop a spatial IRT extension that accommodates polytomous responses, enabling analysis of rating scales, partial credit formats, and graded performance measures that characterize contemporary assessment across domains. Second, we introduce a novel application of flexible Gaussian process priors to replace the restrictive conditional autoregressive assumptions underlying existing spatial IRT methods, allowing data-driven learning of optimal dependency structures while maintaining uncertainty quantification. Third, we pioneer methods for extending spatial modeling beyond geographic coordinates to abstract similarity spaces derived from response patterns, enabling dependency modeling in cognitive, ecological, and other domains where items cluster by theoretical constructs rather than physical location. Fourth, we provide theoretical framework establishing the superior convergence properties of our flexible GP approach compared to the restrictive CAR priors used in \texttt{1PLUS}/\texttt{2PLUS}/\texttt{3PLUS} models. Our empirical studies demonstrate improved parameter recovery and measurement precision compared to existing spatial IRT methods across diverse assessment contexts. The theoretical contributions provide principled guidance for method selection while empirical applications illustrate enhanced ability to recover meaningful difficulty surfaces and improve ability estimation precision. These innovations represent significant advances over the \texttt{1PLUS}/\texttt{2PLUS}/\texttt{3PLUS} family and open new possibilities for spatial measurement applications across research domains.

The remainder of this paper reviews IRT fundamentals and existing spatial approaches, presents our proposed method, \texttt{SGP-IRT}, establishes new theoretical guarantees, demonstrates performance through simulation studies comparing against the spatial IRT family, and discusses implications for measurement practice across research domains.

\section{Background}
\subsection{A brief review of Item Response Theory}

Item Response Theory (IRT) provides a probabilistic framework for modeling the relationship between respondents' latent traits and their responses to test items. The theory specifies that the probability of a particular response to an item depends on both the respondent's ability level and the item's characteristics. This relationship is formalized through item response functions that map latent traits to response probabilities. The classic IRT model for dichotomous responses takes the form:
\begin{equation}
P(X_{ij} = 1 | \theta_i, \boldsymbol{\xi}_j) = f(\theta_i; \boldsymbol{\xi}_j),
\end{equation}
where $X_{ij}$ represents the response (0 or 1) of respondent $i$ to item $j$, $\theta_i$ denotes the respondent's ability, and $\boldsymbol{\xi}_j$ contains the item parameters. 

The Rasch model (1PL) represents the simplest IRT formulation:
\begin{equation}
P(X_{ij} = 1 | \theta_i, b_j) = \frac{\exp(\theta_i - b_j)}{1 + \exp(\theta_i - b_j)},
\end{equation}
where $b_j$ is the item difficulty parameter. This parameter indicates the ability level at which respondents have a 0.5 probability of answering correctly. The 1PL model assumes all items have equal discrimination and no guessing. 

The two-parameter logistic (2PL) model extends this framework:
\begin{equation}
P(X_{ij} = 1 | \theta_i, a_j, b_j) = \frac{\exp[a_j(\theta_i - b_j)]}{1 + \exp[a_j(\theta_i - b_j)]},
\end{equation}
where $a_j$ represents the item discrimination parameter. This parameter governs how sharply the probability of correct response changes with ability near the item's difficulty point.

For items where guessing may occur (i.e., multiple choice questions), the three-parameter logistic (3PL) model adds a lower asymptote:
\begin{equation}
P(X_{ij} = 1 | \theta_i, a_j, b_j, c_j) = c_j + (1-c_j)\frac{\exp[a_j(\theta_i - b_j)]}{1 + \exp[a_j(\theta_i - b_j)]},
\end{equation}
where $c_j$ represents the pseudo-guessing parameter. This parameter reflects the probability of correct response for respondents with extremely low ability.

\subsection{IRT in spatial cluster detection}

Recent methodological advances have extended IRT frameworks to spatial analysis problems. The approach developed by \citet{3PLUS2021} reformulates spatial cluster detection as an IRT problem by conceptualizing geographic regions as ``examinees'' and bootstrap replications as ``test items.'' This innovative approach involves several key steps. First, it generates multiple bootstrap samples from the original spatial data. Then, it applies conventional cluster detection methods (such as scan statistics) to each bootstrap sample. This process produces a binary response matrix where each entry $U_{ij}$ indicates whether region $j$ was included in the detected cluster for replication $i$. Within this framework, the IRT ability parameter $\theta_j$ represents a region's propensity to belong to the true spatial cluster. The model estimates the probability of cluster membership using a logistic function:
\begin{equation}
P(U_{ij} = 1 | \theta_j, a, b) = \frac{1}{1 + \exp[-a(\theta_j - b)]}.
\end{equation}
To assess model fit, the method compares the parametric IRT model to a nonparametric estimator through an $L^1$ distance metric:

\begin{equation}
T = \int_{-\infty}^{\infty} \left| \frac{1}{1 + \exp[-a(\theta-b)]} - \hat{\pi}(\theta) \right| \phi(\theta) d\theta,
\end{equation}
where $\hat{\pi}(\theta)$ derives from isotonic regression of the observed response proportions.

While this approach represents a significant improvement in spatial statistics, several limitations remain. First, the method's performance depends heavily on the initial cluster detection algorithm, which may systematically miss irregularly shaped clusters. Second, the framework lacks formal statistical tests for evaluating secondary clusters. Third, while the method identifies regions with high estimated probabilities outside the primary cluster, it provides no systematic approach for investigating these regions. These limitations motivate our proposed methodological improvements.

\section{Our Proposed Method}
To overcome the limitations of existing approaches, we develop a more flexible framework that combines the strengths of IRT with Gaussian processes for spatial modeling. Our method addresses the challenges of irregular cluster shapes and provides a more systematic approach to uncertainty quantification.

\paragraph{Gaussian process (GP) priors.}
The core of our proposed solution is the Gaussian process, a powerful tool for modeling spatial relationships. A Gaussian process is a collection of random variables \(\{f(\mathbf s):\mathbf s\in\mathcal S\}\) such that every finite vector, \(\mathbf f=(f(\mathbf s_1),\dots,f(\mathbf s_m))^{\!\top}\) is multivariate normal \(\mathcal MVN(\boldsymbol\mu,\mathbf\Sigma)\) (\cite{rasmussen03}). The mean function \(\mu(\mathbf s)=\mathbb E[f(\mathbf s)]\) is often set to~\(0\) for identifiability, whilst the positive‑definite kernel \(k(\mathbf s,\mathbf s')=\operatorname{cov}\{f(\mathbf s),f(\mathbf s')\}\) encodes prior beliefs about smoothness, periodicity or anisotropy. GPs deliver closed‑form marginalisation, hence easy Bayesian inference; automatic uncertainty quantification through the posterior predictive variance; and universal function approximation when combined with flexible kernels \citep{neal97}.

To formalize our model, we denote the number of respondents as \(I\), the number of items as \(J\), and the number of response categories as \(C\ge 2\). For each observation \(n=1,\dots,N\;(N=IJ)\) we store:
\[
  Y_{n}\in\{1,\dots,C\},\qquad
  i(n)\in\{1,\dots,I\},\qquad
  j(n)\in\{1,\dots,J\},\qquad
  \mathbf x_n\in\mathbb R^{p}.
\]
When geographic coordinates are unavailable, we obtain \(\mathbf s_{j}\in\mathbb R^{D}\) (\(D=2\) in our application in ART data) by applying a dimensionality‐reduction technique such as t‐SNE \citep{tsnevandermaaten08a} or Principal Component Analysis (PCA; see \citet{PCA1993303,PCA22} for comprehensive reviews) to the item–by–item similarity matrix; \(\mathbf s_{j}\) thus acts as a \emph{latent} location for item \(j\). In our model, each respondent has an ability \(\theta_i\), and each item has a slope \(\alpha_j\!>\!0\) and a vector of category difficulties \(\boldsymbol\beta_{j}=(\beta_{j1},\ldots,\beta_{jC})^{\!\top}\). For response \(n\) the linear predictor for category \(c\) is
\[
  \eta_{nc}
  = \theta_{\,i(n)}-\beta_{\,j(n)c}
    +\alpha_{\,j(n)}\,\mathbf x_{n}^{\!\top}\boldsymbol\gamma ,
\]
and the probability mass function is
\[
  \mathbb{P} \ \!\bigl(Y_n=c\bigr)=
    \frac{\exp(\eta_{nc})}{\displaystyle\sum_{k=1}^{C}\exp(\eta_{nk})}.
\]

A key innovation in our approach is how we handle item difficulties. Items close in \(\mathbf s\)-space are expected to have similar difficulties. To share information effectively, we place an anisotropic squared‑exponential GP prior on each category vector \(\boldsymbol\beta_{\!\cdot c}\):
\[
 \boldsymbol\beta_{\!\cdot c}\sim\mathcal N \ \!\Bigl(
  \mathbf 0,\,
  \sigma_{\mathrm{gp}}^{2}\mathbf K
  +\sigma_{\mathrm{nug}}^{2}\mathbf I\Bigr),\quad
 K_{jj'}=\exp \left(-\frac{1}{2}\sum_{d=1}^{D}
  \frac{(s_{jd}-s_{j'd})^2}{\ell_d^2}\right),
\]
where \(\boldsymbol\ell=(\ell_1,\dots,\ell_D)^{\!\top}\) allows dimension‑specific length–scales and \(\sigma_{\mathrm{nug}}\) absorbs residual item‑specific noise. Unlike the isotropic squared‑exponential kernel \(k(\mathbf s,\mathbf s')=\sigma^{2}\exp \ \!\bigl(-\|\mathbf s-\mathbf s'\|^{2}/2\ell^{2}\bigr)\) that assumes the same correlation length \(\ell\) in every dimension, our approach recognizes that latent maps produced by t‑SNE often contain one highly‑informative axis and one axis of lesser importance. Allowing individual length‑scales \(\boldsymbol\ell=(\ell_1,\dots,\ell_D)\)—the \emph{automatic relevance determination} kernel of \citet{mackay96} -- lets the data learn which directions matter, increases predictive power (\cite{rasmussen03}), and facilitates interpretability through the relative magnitudes of~\(\ell_d\) (\cite{durrande12}).
For numerical stability and to maintain the prior centered at \(\mathbf 0\), we parameterize
\[
  \boldsymbol\beta_{\!\cdot c}=\sigma_{\mathrm{gp}}\,
    \mathbf L_K\,\mathbf z_{\!\cdot c},\qquad
  \mathbf z_{\!\cdot c}\sim\mathcal N(\mathbf 0,\mathbf I),
\]
where \(\mathbf L_K\mathbf L_K^{\!\top}=\mathbf K\).

For the remaining parameters, we set the priors
\[
  \theta_i\sim\mathcal N(0,\sigma_\theta^{2}),\qquad
  \alpha_j = 1+\tilde\alpha_j,\; \quad
  \tilde\alpha_j\sim\mathcal N(0,\sigma_\alpha^{2}),
\]
\[
  \sigma_\theta,\sigma_\alpha,
  \sigma_{\mathrm{gp}},\ell_d,\sigma_{\mathrm{nug}}
  \;\sim\; \text{HalfNormal}(0,1).
\]
A soft sum‑to‑zero constraint is imposed on every \(\boldsymbol\beta_{\!\cdot c}\) to fix location, which is a common assumption in many item–response models -- parameters are only identified up to an additive constant (\(\theta_i\!+\!c,\,\beta_{jc}\!+\!c\) is observationally equivalent) \citep{Fox2010}.

\paragraph{Implementation details}
Our model is implemented in \texttt{Stan}, leveraging its efficient Bayesian inference capabilities. Posterior sampling uses the No‐U‐Turn Hamiltonian Monte‐Carlo (NUTS) algorithm (\cite{hoffman2011nouturnsampleradaptivelysetting}) with \texttt{adapt\_delta}\,$=0.99$, which indicates the target acceptance‐probability that NUTS tries to achieve during the adaptation phase when it tunes the leap-frog step size. Additionally, we choose \texttt{max\_treedepth}\,$=15$, which is an upper bound on the depth of the binary tree that NUTS builds while it doubles the trajectory length. We ensure convergence by examining trace plots, confirming \(\hat R<1.01\) (\cite{Vehtari2021}), and checking effective sample sizes.

% A significant advantage of our approach is that the GP prior smooths item difficulties over the latent map, mitigating over‐fitting when individual items are sparsely observed and improving out‐of‐sample predictive accuracy. This framework unites the interpretability of classical IRT with the flexibility of Gaussian–process smoothing, and is portable to any test or survey for which either physical or latent coordinates of items are available.

\paragraph{Distinction from the \texttt{3PLUS} model.} 
To clarify the contributions of our approach, it is important to distinguish it from the citizen-science implementation \texttt{3PLUS} of \citet{3PLUS2021}. While both methods leverage IRT principles, they differ in several key aspects. The \texttt{3PLUS} model was designed for camera-trap accuracy data in which (i) the responses are strictly \emph{binary} (correct/incorrect) and (ii) each observation is indexed by a physical site and a biological species. Its spatial layer is therefore either a first–order conditional–autoregressive (CAR) prior (see \cite{Besag91}) on a Voronoi adjacency graph or a stationary covariance that depends on great-circle distance. 

In contrast, our framework offers several advantages: it permits an arbitrary number of categories through the soft-max likelihood; places a Gaussian process prior on the vector of item difficulties, which yields a smooth global correlation surface rather than purely local CAR smoothing; and allows the coordinates $\mathbf{s}_j$ to be either genuine longitude/latitude or latent positions extracted from item similarity. As a consequence, our model can be applied to rating-scale data, partial-credit items, or any test where a latent geometry is more natural than a geographic one. Methodologically, \texttt{3PLUS} decomposes difficulty into \emph{site} and \emph{species} effects and offers on/off switches for guessing $(c)$ and item slopes through its 1PL/2PL/3PL variants, whereas our specification retains a single spatially–smoothed difficulty surface and an item-specific slope but does not include a guessing parameter by default. From an inferential perspective, we use weak half-normal priors that shrink scale parameters toward zero, whereas \texttt{3PLUS} employs wide Uniform/Gamma priors to remain diffuse.

\section{Theoretical Guarantees}
In our model, each item $j$ is assigned a difficulty surface $f_{c}(\mathbf{s}_j)=\beta_{jc}$ for a response category $c$. When the number of respondents $I$ is large, the binary likelihood may be second-order expanded around the truth; conditional on the high-dimensional ability vector, the posterior for each difficulty surface behaves like non-parametric regression with Gaussian noise. In that regard, we first write down the linear predictor for respondent $i$ on item $j$ as 
\[
\eta_{ij}(\beta_{jc}) = \theta_i + \alpha_j \mathbf{x}_{ij}^{\top}\gamma - \beta_{jc}.
\]
For brevity, we call $h_{ij} \coloneqq \theta_i + \alpha_j \mathbf{x}_{ij} ^{\top} \gamma$. Since $Y_{ij} | h_{ij},\beta_{jc} \sim \text{Bern}(\pi_{ij})$ with $\pi_{ij}=\sigma(h_{ij}-\beta_{jc})$ and $\sigma(.)$ being the sigmoid function, for a single item $j$, the conditional log-likelihood as a function of $\beta_{jc}$ is
\begin{equation}\label{eq:likelihood}
l_j(\beta_{jc}) = \sum_{i=1}^{I} \left\{Y_{ij}[h_{ij}-\beta_{jc}] - \log(1+\exp(h_{ij}-\beta_{jc}))\right\}   
\end{equation}
Throughout this section, we fix the respondent abilities $\theta_i$ and the nuisance parameters $(\alpha_j, \boldsymbol{\gamma})$ - they will be estimated jointly in the full Bayesian analysis, but for studying the marginal posterior of the difficulty surface these quantities may be regarded as conditioning variables. We denote their true values by $(\theta_i ^{0},\alpha_j ^{0},\gamma^{0})$ and write $h_{ij}^{0} \coloneqq \theta_{i} ^{0} + \alpha_{j}^{0}\mathbf{x}_{ij} ^\top \boldsymbol{\gamma}^{0}$. Taylor-expanding \ref{eq:likelihood} around the truth $\beta_{jc} ^{0}$: 
\begin{equation*}
    l_j (\beta_{jc}) = l_j (\beta_{jc} ^{0}) + (\beta_{jc}-\beta_{jc}^{0})l_j '(\beta_{jc}^{0})+\frac{1}{2}(\beta_{jc}-\beta_{jc} ^{0})^2 l_j ''(\tilde{\beta}_{jc})
\end{equation*}
where $\tilde{\beta}_{jc} \in (\beta_{jc},\beta_{jc}^{0})$. Conditional on the fixed abilities $\{\theta_i ^{0}\}$, the summands $Y_{ij}-\pi_{ij}^{0}$ with $\pi_{ij}^{0} = \sigma(h_{ij}^{0}-\beta_{jc} ^{0})$ are independent, mean-zero and have finite variance $\pi_{ij}^{0}(1-\pi_{ij} ^{0}).$ Therefore, the Lindeberg-Feller Central Limit Theorem gives
\[
I^{-1/2}l_{j} '(\beta_{jc} ^{0}) \implies \mathcal{N}(0,\tilde{w}_j), \quad \text{as} \ I \rightarrow \infty.
\]
with $\tilde{w}_j = I^{-1}\sum_{i=1}^{I}\pi_{ij} ^{0}(1-\pi_{ij}^{0})$. In addition, $I^{-1}l_{j}''(\tilde{\beta}_{jc}) \rightarrow-\tilde{w}_j$ uniformly on a neighbourhood of $\beta_{jc}^{0}$ by the Law of Large Numbers. Hence with probability $1-o(1)$, we have the second-order local asymptotic normality expansion,
\begin{equation}\label{eq:LAN}
    l_{j}(\beta_{jc})-l_{j}(\beta_{jc}^{0}) = -\frac{I \tilde{w}_j}{2} (\beta_{jc}-\beta_{jc}^{0} - \xi_j)^2 + \mathcal{O}_p(1),
\end{equation}
where $\xi_j = I^{-1}l_j '(\beta_{jc}^{0})/I\tilde{w}_j$ satisfies $\xi_j \stackrel{\text{d}}{\rightarrow} \mathcal{N}(0,(I\tilde{w}_j)^{-1})$. Exponentiating \ref{eq:LAN} shows that, conditional on the latent abilities, the likelihood for $\beta_{jc}$ is locally equivalent to
\begin{equation*}
    \tilde{L}_j(\beta_{jc}) = \exp\left\{-\frac{I\tilde{w}_j}{2}(\beta_{jc} - y_{jc})^2\right\}, \ \ y_{jc}=\beta_{jc}^{0}+\xi_j \stackrel{\text{iid}}{\sim} \mathcal{N}(\beta_{jc}^{0},(I\tilde{w}_j)^{-1})
\end{equation*}
Thus, posterior inference for the vector $\boldsymbol{\beta}_{.c}$ is first-order asymptotically equivalent to non-parametric Gaussian regression on the lattice of item locations \begin{equation}
    y_{jc} = f_0 (\mathbf{s}_j) + \epsilon_{jc}, \ \ \epsilon_{jc}\stackrel{\text{iid}}{\sim}\mathcal{N}(0,\sigma^2/I), 
\end{equation}
with $f_0(\mathbf{s}_j)=\beta_{jc} ^{0}$. Hence all asymptotic questions about recovering $\boldsymbol\beta_{\!\cdot c}$ may be addressed with the vast existing literature on Bayesian non–parametric Gaussian regression \citep{vandervaartvanzanten_2008,Ghosal_2007}. In this reduced problem our model places an anisotropic squared–exponential Gaussian process prior on
$f_{0}$, whereas the \texttt{3PLUS} family endows $f_{0}$ with an intrinsic
first–order conditional autoregressive (CAR) prior.
The two priors encode very different smoothness assumptions:
the squared–exponential GP supports functions that are
infinitely differentiable, while the intrinsic first–order CAR
corresponds to a Sobolev space of smoothness $m=1$.
In order to assess the posterior contraction rates to the true difficulty surface $f_0$ for each of the anisotropic GP and the CAR priors, we state the following theoretical results.

\subsubsection*{Assumptions and notations}
\begin{itemize}
    \item[(A1)] Items sit on a regular $m \times \dots \times m \ (m^D = J)$ grid with spacing $h=m^{-1}$.
    \item[(A2)] The unknown surface $f_0$ belongs to the H\"older ball $\mathcal{C}^{\nu}(B)$, with $\nu > 1$:
    \[
    \mathcal{C}^{\nu}(B)=\{f: |\partial^{\alpha} f(\mathbf{s})-\partial^{\alpha} f(\mathbf{t})| \le B\|\mathbf{s}-\mathbf{t}\|^{\nu-|\alpha|} , \ 0 \le |\alpha|<\nu \}.
    \]
    \item[(A3)] The anisotropic GP prior has mean $0$; and kernel $k(\mathbf{s},\mathbf{t}) = \sigma_{\text{gp}}^2 \exp[-\frac{1}{2}(\mathbf{s}-\mathbf{t})^\top \Lambda^{-1}(\mathbf{s}-\mathbf{t})]$ with $\Lambda = \text{diag}(\lambda_1 ^2,\dots,\lambda_D ^2)$.
    \item[(A4)] Intrinsic first-order CAR with precision $\tau(D-\rho W)$ and $\rho \in (0,1)$: 
    \[
    \boldsymbol{\beta}_{.c}\mid \tau,\rho \sim \mathcal{N}_{J}(\mathbf{0},[\tau(D-\rho W)]^{-1}),
\]
    where $W = (w_{j,j'})_{1 \le j,j'\le J}$,  each element $w_{j,j'} = \mathbbm{1}\{\mathbf{s}_j \ \text{and} \ \mathbf{s}_{j'} \ \text{are} \ \text{nearest-neighbours}\}$ and $D = \text{diag}(d_1,\dots,d_J)$ with $d_j = \sum_{j'}w_{jj'}$.
\end{itemize}
We denote the posterior mean under the GP prior as $\hat{f}_{\text{GP}}$ and the posterior mean under the CAR prior as $\hat{f}_{\text{CAR}}$. 
Assumption (A1) reflects technical convenience -- a regular $m^D$ lattice makes the pairwise spacing $h \approx J^{-1/D}$ explicit. Letting $J \rightarrow \infty$ amounts to observing the difficulty surface on an increasingly fine, deterministic design. Irregular designs introduce additional random-design terms that obscure the comparison between priors. 
\begin{theorem}\label{thm:GPcontraction}
Under the assumptions (A1), (A2) and the anisotropic GP prior in assumption (A3), for a sufficiently large constant $M > 0$, 
$$
\Pi_{P_{f_0}}(\|f-f_0\|_{L^2} > M \epsilon_J | \mathbf{y}) \longrightarrow 0
$$
where $\epsilon_J = J^{-\nu/(2\nu +D)}$.
\end{theorem}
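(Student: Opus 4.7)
My proof strategy is to exploit the Gaussian regression representation established in the discussion leading to this theorem, which uses the LAN expansion \eqref{eq:LAN} to reduce posterior inference for $\boldsymbol{\beta}_{\cdot c}$ to nonparametric regression on the $J$-point grid, namely $y_{jc}=f_0(\mathbf{s}_j)+\epsilon_{jc}$ with $\epsilon_{jc}\stackrel{\mathrm{iid}}{\sim}\mathcal{N}(0,\sigma^2/I)$. I then invoke the general posterior-contraction theorem of Ghosal--Ghosh--van der Vaart (2000), specialised to Gaussian process priors by van der Vaart--van Zanten (2008). The proof reduces to verifying three standard conditions: (i) a small-ball prior mass lower bound around $f_0$, (ii) a sieve construction with controlled metric entropy, and (iii) an exponential tail bound for the prior mass outside the sieve.

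The central quantity is the GP concentration function
\[
\phi_{f_0}(\epsilon)=\inf_{h\in\mathbb{H}_K:\,\|h-f_0\|_{\infty}\le\epsilon}\tfrac{1}{2}\|h\|_{\mathbb{H}_K}^{2}\;-\;\log\Pi\bigl(\|W\|_\infty\le\epsilon\bigr),
\]
where $\mathbb{H}_K$ is the RKHS of the anisotropic squared-exponential kernel and $W$ is the centred GP. Applying the small-ball and decentering estimates for rescaled squared-exponential processes (van der Vaart--van Zanten 2009), for $f_0\in\mathcal{C}^{\nu}(B)$ on $[0,1]^D$ and a common length-scale $\lambda$, one obtains (up to polylogarithmic terms)
\[
\phi_{f_0}(\epsilon)\;\lesssim\;\lambda^{-D}\{\log(1/\epsilon)\}^{1+D}+\lambda^{-2\nu}.
\]
Choosing $\lambda^{\ast}\asymp J^{-1/(2\nu+D)}$ balances the two terms and solving $J\epsilon_J^{2}\gtrsim\phi_{f_0}(\epsilon_J)$ returns the claimed rate $\epsilon_J=J^{-\nu/(2\nu+D)}$.

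For the remaining ingredients I would take the standard GP sieve $\mathcal{F}_J=M_J\mathbb{H}_K^{1}+\epsilon_J\mathbb{B}_\infty$, the Minkowski sum of an RKHS ball of radius $M_J$ and a sup-norm ball of radius $\epsilon_J$. Borell's inequality then controls $\Pi(\mathcal{F}_J^{c})$ exponentially in $M_J^{2}$, while Kuelbs--Li type metric-entropy bounds for the RKHS unit ball of squared-exponential kernels give $\log N(\epsilon_J,\mathcal{F}_J,\|\cdot\|_{2})\le c\,J\epsilon_J^{2}$. Passing from the $L^{\infty}$-based concentration function to the $L^{2}$-based contraction statement of the theorem is automatic on the bounded design $[0,1]^{D}$. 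Substituting these estimates into the master contraction theorem and discharging condition (i) via $\phi_{f_0}(\epsilon_J)\le J\epsilon_J^{2}$ yields the conclusion.

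The main obstacle is the \emph{adaptive} treatment of the anisotropic length-scales. Since $\nu$ is unknown, the optimal bandwidth $\lambda^{\ast}$ is not chosen by the analyst; rather the half-normal hyperprior on each $\ell_d$ must itself place at least $\exp(-c J\epsilon_J^{2})$ mass on a shrinking box $\prod_{d}[\lambda^{\ast}/2,2\lambda^{\ast}]$, and the small-bandwidth tail must be controlled uniformly in $J$. This mirrors the adaptive-bandwidth argument of van der Vaart--van Zanten (2009), but must be transplanted from inverse-gamma bandwidths to the half-normal density used here; the polynomial (rather than exponential) behaviour of the half-normal near zero alters the small-bandwidth prior lower bound and is the genuinely non-trivial part of the argument. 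A secondary, milder obstacle is that the anisotropic kernel forces one to verify the concentration-function bound jointly in all $D$ length-scales, rather than via a scalar rescaling.
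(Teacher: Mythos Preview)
Your overall route---reduce to Gaussian nonparametric regression via the LAN expansion and then invoke the Ghosal--Ghosh--van der Vaart master theorem---matches the paper exactly. The differences are structural rather than strategic. The paper does not work through the concentration function $\phi_{f_0}$; instead it verifies the three GGvdV conditions by separate explicit lemmas: a KL--mass lemma using the small-ball bound $\Pi(B_2(f_0,r))\ge\exp(-C_1 r^{-D/\nu})$, an explicit chi-squared goodness-of-fit test $T_J=\sum_j(y_j-f_0(\mathbf s_j))^2$ whose Type~I/II errors are controlled via Laurent--Massart and non-central $\chi^2$ concentration, and a sieve that is simply the RKHS ball $\mathcal F_J=\{\|f\|_{\mathcal H_k}\le R_J\}$ with $R_J=J^{D/(4\nu+2D)}$ (not your Minkowski sum $M_J\mathbb H_K^1+\epsilon_J\mathbb B_\infty$). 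Entropy of this ball is handled by a Fourier-truncation argument yielding the polylogarithmic bound $\log N(\delta,\mathbb H_1,\|\cdot\|_\infty)\lesssim(\log(1/\delta))^{D/2+1}$, and the prior tail by Borell--Sudakov--Tsirelson. Your concentration-function packaging is more economical and is the modern van der Vaart--van Zanten route; the paper's lemma-by-lemma verification is more elementary and self-contained, in particular the explicit test construction avoids appealing to abstract entropy-based test existence.

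One point where you diverge more substantively: assumption~(A3) fixes the length-scale matrix $\Lambda$, and the paper's proof treats the $\lambda_d$ as constants throughout---there is no integration over a bandwidth hyperprior, no shrinking $\lambda^\ast\asymp J^{-1/(2\nu+D)}$, and no adaptive argument. Your discussion of balancing $\lambda^{-D}(\log(1/\epsilon))^{1+D}$ against $\lambda^{-2\nu}$, and your worry about whether the half-normal hyperprior puts enough mass near a $J$-dependent optimal bandwidth, addresses the full hierarchical model of Section~3 rather than the theorem as stated. For the theorem you may drop that entire layer. (Your instinct that shrinking bandwidths are what actually deliver the $\mathcal C^\nu$ rate for squared-exponential kernels is well founded, but the paper's argument proceeds with $\Lambda$ fixed and does not engage with that issue.)
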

The proof of Theorem \ref{thm:GPcontraction} follows the posterior contraction results of \citet[Theorem 2.1]{Ghosal2000} and \citet{JMLR:v12:vandervaart11a}. Specifically, we need to show a KL-neighborhood condition, which ensures the GP posterior can concentrate near $f_0$, followed by a sieve entropy condition, which effectively keeps the parameter space finite-dimensional at the target resolution $\epsilon_J$. Finally, we want to guarantee a prior mass condition which guarantees the posterior cannot put appreciable mass on functions further than $M \epsilon_J$ from $f_0$. Now, we study the minimax lower bound for the CAR posterior mean. 

To establish these conditions rigorously, we leverage the Reproducing Kernel Hilbert Space (RKHS) associated with our Gaussian process prior, which encapsulates the smoothness properties imposed by our GP prior through its anisotropic squared-exponential kernel \citep{wahba1990spline,vaartzanten2009}. Characterizing the support of the GP prior in terms of its RKHS is essential to verifying the prior mass condition. Indeed, the RKHS norm directly quantifies smoothness assumptions, enabling precise specification of prior concentration rates around functions in Hölder or Sobolev classes \citep{vandervaartvanzanten_2008,JMLR:v12:vandervaart11a}.
\begin{theorem}\label{thm:CARminimax}
    For every $\nu > 1$, \[
    \inf_{\hat{f}} \sup_{f_0 \in \mathcal{C}^{\nu}(B)} \mathbb{E}\|\hat{f}-f_0\|_{L^2} ^2 \gtrsim J^{-2/(2+D)}
    \]
    In particular, $\mathbb{E}\|\hat{f}_{\text{CAR}}-f_0\|_{L^2} ^2 = \mathcal{O}(J^{-2/(2+D)})$.
\end{theorem}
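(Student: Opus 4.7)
}
The theorem splits into a minimax lower bound over $\mathcal{C}^{\nu}(B)$ and a matching upper bound for $\hat{f}_{\text{CAR}}$. I would handle the two directions separately.

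For the lower bound I would apply a standard Assouad / Fano reduction tailored to a smoothness-$1$ sub-family of $\mathcal{C}^{\nu}(B)$. Partition $[0,1]^D$ into $M \asymp J^{D/(2+D)}$ disjoint cubes of side $h = M^{-1/D}$, fix a $C^{\infty}$ bump $\phi$ supported in $[0,1]^D$, and for each $\theta \in \{0,1\}^M$ set $f_{\theta}(\s) = a \sum_{k=1}^{M} \theta_k \phi\!\left(h^{-1}(\s - \c_k)\right)$ with amplitude $a$ tuned so that $\|f_{\theta}\|_{\mathcal{C}^{\nu}} \le B$ uniformly in $\theta$; because the bumps sit on disjoint cubes of side $h$, this fixes $a$ up to a constant and places each $f_{\theta}$ inside $\mathcal{C}^{\nu}(B)$ for every $\nu \ge 1$. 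Using the LAN reduction of \eqref{eq:LAN} to convert the Bernoulli model into independent Gaussian observations with variance $\mathcal{O}(I^{-1})$, standard bump calculations give pairwise squared $L^2$ separation of order $J^{-2/(2+D)}$ per unit Hamming distance and total Kullback--Leibler divergence between any two hypotheses bounded by a constant multiple of their Hamming distance. Assouad's lemma then delivers the advertised $\gtrsim J^{-2/(2+D)}$ bound.

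For the upper bound, I would write the CAR posterior mean as the explicit linear smoother $\hat{f}_{\text{CAR}} = (\I_J + \lambda \L)^{-1}\y$, where $\L = \D - \rho \W$ is the graph Laplacian and $\lambda$ is the reciprocal of the precision parameter, and exploit the equivalence between the intrinsic first-order CAR prior and a discrete $H^1$ prior. Diagonalising $\L$ in its eigenbasis $\{(\mu_k, \mathbf{e}_k)\}$ decomposes the integrated squared risk into a variance term $\asymp I^{-1}\sum_k (1+\lambda \mu_k)^{-2}$ and a squared-bias term $\asymp \sum_k \lambda^2 \mu_k^2 (1+\lambda \mu_k)^{-2} \langle f_0, \mathbf{e}_k\rangle^2$. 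Weyl's law $\mu_k \asymp (k/J)^{2/D}$ on the regular $m^D$ grid together with the Sobolev embedding $\mathcal{C}^{\nu}(B) \hookrightarrow H^1$ controls the variance by $I^{-1}\lambda^{-D/2}$ and the bias by $\|f_0\|_{H^1}^2$, which saturates once $\nu \ge 1$. Optimising $\lambda$ balances the two contributions and yields the $J^{-2/(2+D)}$ rate, matching the lower bound.

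The main obstacle is the \emph{bias saturation}: extra Hölder smoothness beyond $\nu = 1$ does not accelerate the bias decay, because the CAR reproducing kernel Hilbert space is isomorphic (up to universal constants) only to a discrete $H^1$ space and cannot resolve higher-order derivatives of $f_0$. The cleanest route to a rigorous proof, which I would adopt, is the concentration-function / RKHS-approximation framework of \citet{vandervaartvanzanten_2008}: one verifies that $-\log \Pi(\|f\|_{L^2} < \epsilon) \asymp \epsilon^{-D}$ under the CAR prior (consistent with effective smoothness $1$) and that the best CAR-RKHS approximant to any $f_0 \in \mathcal{C}^{\nu}(B)$ has $\|\cdot\|_{\infty}$-error decaying at a $\min(\nu,1)$-rate. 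Balancing these two exponents via \citet[Theorem 2.1]{Ghosal2000} reproduces the contraction rate $J^{-2/(2+D)}$ for every $\nu \ge 1$, independently of $\nu$, making precise the limitation of CAR relative to the adaptive GP rate established in Theorem \ref{thm:GPcontraction}.
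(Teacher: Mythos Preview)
Your upper-bound argument is essentially the paper's own: both diagonalise the CAR posterior mean in the eigenbasis of the graph Laplacian---which on the regular $m^{D}$ lattice coincides with the discrete Fourier basis the paper uses---and balance bias against variance frequency-by-frequency, arriving at $J^{-2/(2+D)}$ after tuning $\tau\asymp m^{2}$ (your $\lambda$). Your additional \citet{vandervaartvanzanten_2008} concentration-function route is extra; the paper stays with the direct spectral computation.

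The lower bound is where you diverge from the paper, and also where there is a genuine gap. The paper does \emph{not} run Assouad or Fano: it uses Le~Cam's two-point lemma with $f_{0}\equiv 0$ and $f_{1}=\delta_{J}\psi_{\boldsymbol\kappa^{*}}$, a single sinusoid at frequency $K\asymp m=J^{1/D}$, bounding total variation through Pinsker. Your Assouad hypercube with localised bumps is a reasonable alternative in spirit, but the arithmetic fails at the step where you claim ``squared $L^{2}$ separation of order $J^{-2/(2+D)}$ per unit Hamming distance.'' Tuning $a$ so that $\|f_{\theta}\|_{\mathcal C^{\nu}}\le B$ forces $a\asymp h^{\nu}$, because the $\nu$-th derivative of a width-$h$ bump scales like $a h^{-\nu}$. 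The per-Hamming squared $L^{2}$ separation is then $a^{2}h^{D}\asymp h^{2\nu+D}=J^{-(2\nu+D)/(2+D)}$, and Assouad delivers $M\cdot h^{2\nu+D}=h^{2\nu}=J^{-2\nu/(2+D)}$, strictly smaller than the target whenever $\nu>1$. Conversely, if you take the ``smoothness-$1$'' amplitude $a\asymp h$ to force the separation you want, $\|f_{\theta}\|_{\mathcal C^{\nu}}\asymp h^{1-\nu}\to\infty$ and the hypotheses leave $\mathcal C^{\nu}(B)$. So there is no ``smoothness-$1$ sub-family of $\mathcal C^{\nu}(B)$'' built from width-$h$ bumps at your chosen bandwidth that simultaneously satisfies the H\"older constraint and produces the claimed separation; the paper's single high-frequency Fourier alternative is how it handles this tension.
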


Theorem \ref{thm:CARminimax} gives a minimax benchmark for recovering the latent difficulty surface $f_0$ under assumptions (A1) and (A2) in the sense that, the rate $J^{-2/(2+D)} = m^{-2}$ is the information-theoretic optimum for this design. In particular, even though the Hölder exponent $\nu$ could be large, the lattice design saturates the rate at $2/(2+D)$; smoother truth cannot be exploited without a denser grid. We show the minimaxity by invoking a Fourier‑based Le Cam's two-point method \citep[Chapter 2]{Tsybakov2009}. First, two Fourier ``spikes’’ are used to establish a
minimax \emph{lower} bound via Le Cam’s two–point lemma (Lemmas~\ref{lem:orthbasis}–\ref{lem:two_point}). Secondly, Proposition \ref{prop:CAR_rate} shows that because the intrinsic CAR prior and the Gaussian likelihood are both diagonal in the same Fourier basis, the posterior mean is an explicit linear shrinker whose bias and variance can be summed frequency-by-frequency.  Tuning the global precision \(\tau\) at order \(m^2\) balances the two contributions and achieves the very same rate, proving that the CAR posterior mean is minimax‐optimal. Discrete Fourier techniques identical to ours have been used in functional linear regression -- for instance \citet{CaiHall2006} (functional PCA regression) and \citet{Yuan_2010} use eigen-basis constructions to derive minimax lower bounds for estimating functional coefficients, the core idea being the same: treat the hardest to estimate eigenfunction as a Fourier-like basis element and apply a two-hypothesis or Fano's argument.

Thus, combining the findings from the aforementioned Theorems \ref{thm:GPcontraction} and \ref{thm:CARminimax}, we have that the anisotropic GP prior dominates CAR for smooth surfaces. 

\begin{corollary}\label{thm:domination}
Under the assumptions (A1)--(A4), we have 
\[
\frac{\mathbb{E}_{f_0} \|\hat{f}_{\text{GP}} - f_0\|_{L^2} ^2}{\mathbb{E}_{f_0} \|\hat{f}_{\text{CAR}} - f_0\|_{L^2} ^2} \longrightarrow 0, \] as $J \rightarrow \infty$. 
\end{corollary}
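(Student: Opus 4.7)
The plan is to combine the rates established in Theorems~\ref{thm:GPcontraction} and \ref{thm:CARminimax}, upgrade the in-probability contraction of Theorem~\ref{thm:GPcontraction} to an expected squared-$L^2$ risk bound for the posterior mean, and then reduce the claim to a one-line exponent comparison that relies only on the hypothesis $\nu > 1$ from assumption (A2).

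First, I would show that Theorem~\ref{thm:GPcontraction} implies
$$\mathbb{E}_{f_0}\|\hat{f}_{\text{GP}} - f_0\|_{L^2}^{2} \lesssim \epsilon_J^{2} = J^{-2\nu/(2\nu+D)}.$$
This conversion follows the standard Ghosal--van~der~Vaart template: on the sieve used to verify the entropy condition in the proof of Theorem~\ref{thm:GPcontraction}, Borell's Gaussian concentration inequality (see \citet{JMLR:v12:vandervaart11a}) supplies exponential tail control of the posterior, while off the sieve the prior mass is exponentially small. A truncation argument together with Jensen's inequality then upgrades convergence in posterior probability to convergence of the posterior mean in $L^{2}(P_{f_0})$ at the squared rate $\epsilon_J^{2}$.

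Second, Theorem~\ref{thm:CARminimax} already gives $\mathbb{E}_{f_0}\|\hat{f}_{\text{CAR}} - f_0\|_{L^2}^{2} \asymp J^{-2/(2+D)}$, with the matching lower bound supplied by the Fourier-based two-point construction. Substituting both rates,
$$\frac{\mathbb{E}_{f_0}\|\hat{f}_{\text{GP}} - f_0\|_{L^2}^{2}}{\mathbb{E}_{f_0}\|\hat{f}_{\text{CAR}} - f_0\|_{L^2}^{2}} \;\lesssim\; J^{\,2/(2+D)\,-\,2\nu/(2\nu+D)}.$$
A direct algebraic simplification yields
$$\frac{2}{2+D} - \frac{2\nu}{2\nu+D} \;=\; \frac{2D(1-\nu)}{(2+D)(2\nu+D)},$$
which is strictly negative precisely when $\nu > 1$. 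Since assumption (A2) supplies exactly this inequality, the exponent is negative and the ratio tends to zero as $J \to \infty$.

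The main obstacle I expect is the first step, namely upgrading in-probability contraction to an expected-squared-risk bound. For anisotropic squared-exponential GPs this is well understood but not automatic: it requires care with uniform integrability outside the sieve. A clean workaround is to define $\hat{f}_{\text{GP}}$ as the posterior mean truncated onto the entropy sieve constructed in the proof of Theorem~\ref{thm:GPcontraction}, after which the upgrade is routine. A secondary subtlety is verifying that the nuisance parameters $(\theta_i,\alpha_j,\boldsymbol{\gamma})$ held fixed in the LAN reduction preceding Theorem~\ref{thm:GPcontraction} do not asymmetrically inflate the CAR denominator; since both rates in the comparison are derived conditional on the same nuisance values, this conditioning cancels and the asymptotic ratio statement is unaffected.
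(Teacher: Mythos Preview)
Your proposal is correct and matches the paper's approach: the paper treats the corollary as an immediate consequence of combining the rates in Theorems~\ref{thm:GPcontraction} and~\ref{thm:CARminimax}, with no separate proof given beyond the sentence ``combining the findings from the aforementioned Theorems.'' Your exponent calculation $\frac{2}{2+D}-\frac{2\nu}{2\nu+D}=\frac{2D(1-\nu)}{(2+D)(2\nu+D)}<0$ for $\nu>1$ is exactly the comparison the paper has in mind, and you are in fact more careful than the paper in flagging that Theorem~\ref{thm:GPcontraction} delivers posterior contraction in probability rather than an expected $L^2$ risk bound for the posterior mean---a gap the paper does not explicitly address.
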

\paragraph{Asymptotic dominance.} For sufficiently many items, Corollary \ref{thm:domination} suggests that the GP will uniformly outperform CAR in recovering the latent difficulty landscape. The result reflects the GP prior’s ability to adapt to unknown smoothness and anisotropy through its length-scale matrix $\Lambda$, whereas the CAR prior imposes a fixed, low-order Markov random-field roughness penalty. When the true surface is smoother than first-order differences can capture, the GP borrows strength over larger neighbourhoods and therefore enjoys substantially lower estimation error.

\section{Simulation Study}
We designed a simulation study to assess the performance of our spatial multinomial item response model. The simulation reflects a typical IRT setting with spatially referenced items and includes a single covariate effect, similar to \citet{3PLUS2021}. To facilitate direct comparison with contemporary IRT models, we focus on the binary response case (i.e., two response categories) and generate data using a softmax formulation.

\paragraph{Simulation setup}
Our simulation includes $200$ individuals, each representing a respondent whose latent ability is to be estimated. These respondents interact with $50$ spatially located items. To incorporate item-specific features, we generate a single covariate from a uniform distribution over $[-1,1]$, which could represent characteristics such as image quality in \citet{3PLUS2021}. Each item is assigned spatial coordinates (e.g., representing longitude and latitude), drawn independently from a uniform distribution over the unit interval. 

Additionally, to capture item-level grouping effects that often occur in real-world settings, we assign each item $j$ to one of $S=5$ latent categories (e.g., species), sampled uniformly. These category labels enable us to study subgroup heterogeneity in item difficulty or covariate interactions, reflecting the complexity often found in practical applications.

For the parameter values driving this simulation, we set each individual's ability to be drawn from a normal distribution with $\sigma_\theta^{\text{true}} = 1$. The discrimination parameter for item $j \in \{1,2,\dots,50\}$ is generated with $\sigma_\alpha^{\text{true}} = 0.5$. For each response category $c \in \{1,2\}$, the item difficulties are modeled as a spatial process following the approach described in our proposed methods. The true hyperparameters for the exponential kernel $\mathbf{K}$ are set as $\sigma_{gp}^{\text{true}} = 1.0$ (for the GP marginal variance), $l^{\text{true}} = 0.3$ (lengthscale), and $\sigma_{nug}^{\text{true}} = 0.1$ (nugget variance). This configuration yields a total of $N = 200 \times 50 = 10,000$ observations for our analysis.

\paragraph{Simulation results}
For clarity in the following comparisons, we refer to our proposed method as \texttt{SGP-IRT} (Spatial Gaussian Process - Item Response Theory). We first compare the parameter estimation accuracy across different models. Figure \ref{fig:rmseboxplots} presents the Root Mean Squared Errors (RMSEs) of the posterior-estimated parameters $\theta$, $\alpha$ and $\beta$ (for Category 2, as Category 1 is taken to be the reference level) as well as the corresponding variance parameters $\sigma_\theta$ and $\sigma_\alpha$. These metrics are computed with respect to the true parameter values specified previously. We compare our \texttt{SGP-IRT} with three established methods: \texttt{1PLUS}, \texttt{2PLUS}, and \texttt{3PLUS}.

\noindent
\begin{figure}[htbp]
    \centering
    \includegraphics[width=0.9\textwidth]{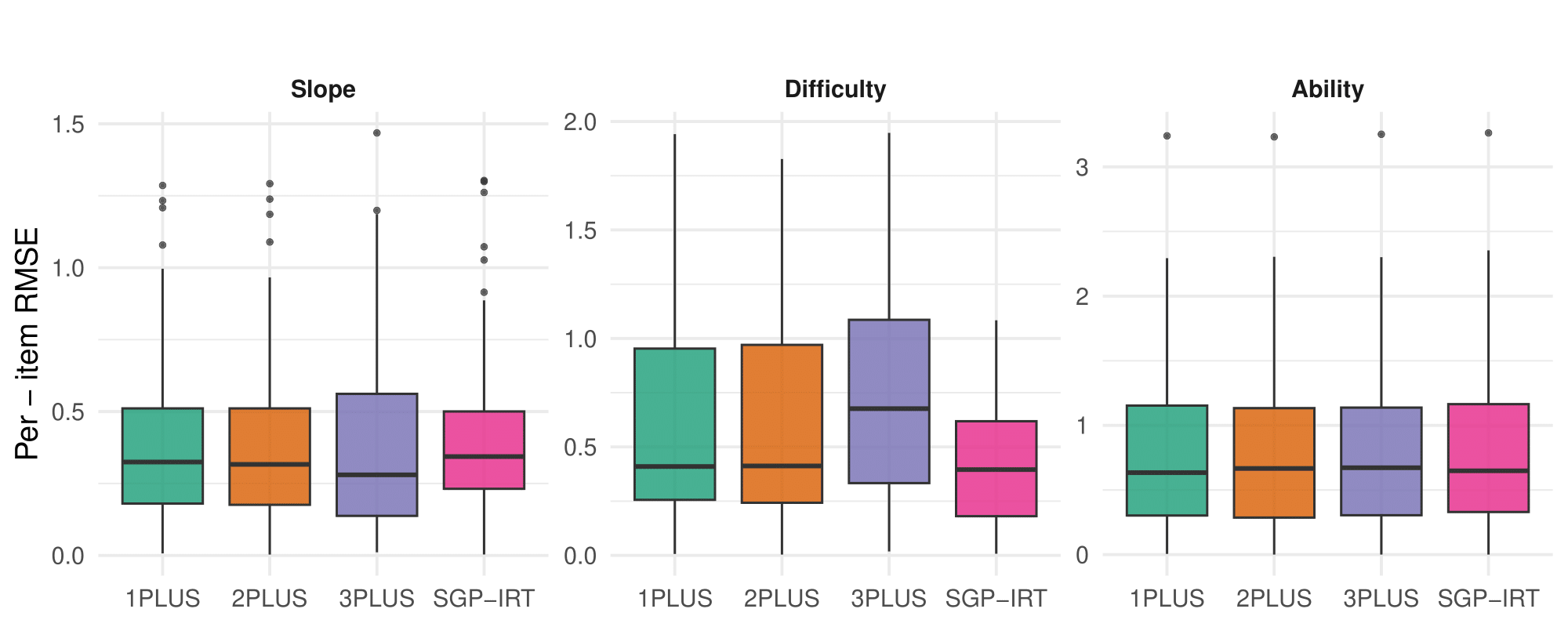}
    \caption{Parameter estimation accuracy comparison across four spatial IRT methods. Box plots show Root Mean Squared Errors (RMSE) for posterior-estimated parameters $\theta$ (ability), $\alpha$ (discrimination), and $\beta$ (difficulty for Category 2, with Category 1 as reference), as well as variance parameters $\sigma_\theta$ and $\sigma_\alpha$. Results are based on 200 respondents and 50 spatially located items. SGP-IRT demonstrates better performance, particularly for difficulty parameter estimation (mean RMSE of 0.201 vs. 0.709 for 2PLUS), reflecting the effectiveness of Gaussian process spatial smoothing in borrowing information across spatially proximate items.}
    \label{fig:rmseboxplots}
\end{figure}

Figure \ref{fig:rmseboxplots} highlights the better performance of \texttt{SGP-IRT} across all the methods. Examining these results in detail, we find that all four methods perform comparably in estimating respondent ability $(\theta)$, with minimal differences in RMSE values. For item discrimination, \texttt{SGP-IRT} slightly outperforms the others with the lowest RMSE, indicating better recovery of the slope parameter $\alpha$. The most striking advantage of our approach appears in the estimation of the difficulty parameter $\beta$, where \texttt{SGP-IRT} achieves a mean RMSE of 0.201, substantially lower than the next best performer (\texttt{2PLUS} at 0.709). This dramatic improvement reflects the key strength of GP-based spatial smoothing, which effectively borrows information across items using their spatial proximity. The \texttt{3PLUS} model shows notably inferior performance here, likely due to limitations of the CAR structure in capturing complex spatial variation compared to the flexibility offered by Gaussian processes. Furthermore, \texttt{SGP-IRT} estimates both variance components ($\sigma_\theta$ and $\sigma_\alpha$) much more accurately than other methods. The RMSEs drop from approximately 0.9 in classical models to 0.132 and 0.053 respectively, confirming that the hierarchical GP structure substantially improves hyperparameter recovery. These results collectively demonstrate the parameter estimation advantages of our approach. 

\textbf{Remark.} \emph{Note that our simulation uses $\nu \approx 2$ (twice differentiable GP draw), $D=2$. Plugging it into the rates from Corollary \ref{thm:domination} gives $J^{-2/3}$ for GP and $J^{-1/2}$ for CAR. That would imply that the the ratio of the MSEs should be $\asymp J^{-1/6}$. With $J=50$, this predicts an approximately $35\%$ reduction in RMSE. The observed empirical improvement is substantially greater, with the GP-based RMSE being over four times smaller than the CAR-based one.}
\paragraph{Multinomial response data}
To demonstrate our model's ability to handle multinomial response data (more than two categories), we extend the same spatial‐softmax simulation described previously from $C=2$ to $C=4$ outcome categories. Specifically, we draw true category-$c$ difficulties $\beta_{jc}$ jointly from a 4D Gaussian process with the same specifications as in the original setting. Then for each observation $n$ we build the 4-vector of logits $\eta_{nc}$, apply the softmax to obtain $\mathbb{P}(Y_n = c)$, and sample $Y_n \in \{1,2,3,4\}$. Finally, we fit our model and compute, for each category $c$, the RMSE of the posterior mean $\hat{\beta}_{jc}$ over the $50$ items.

\begin{figure}[htbp]
    \centering
    \includegraphics[width=0.6\textwidth]{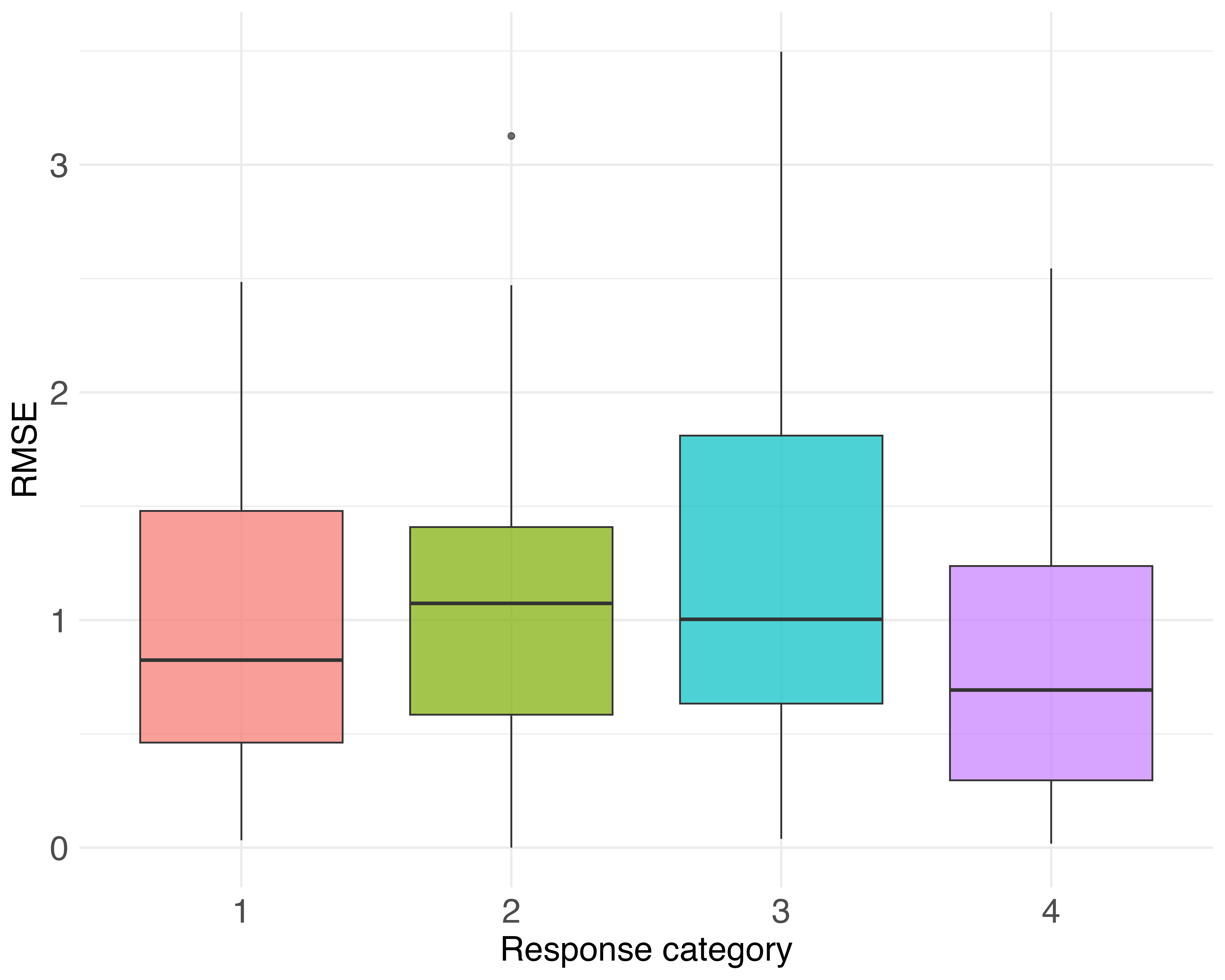}
    \caption{Recovery performance for multinomial response data with four outcome categories. Box plots show item-level RMSE for posterior mean difficulty estimates $\hat{\beta}_{jc}$ across 50 items for each response category $c \in \{1,2,3,4\}$. Categories 1 and 4, assigned smoother Gaussian process surfaces in simulation, exhibit lower median RMSE, while Category 3, with a rougher true surface, shows slightly higher RMSE. Results demonstrate that SGP-IRT scales naturally to polytomous outcomes without performance degradation.}
    \label{fig:multicat-rmse}
\end{figure}

Figure~\ref{fig:multicat-rmse} shows that our \texttt{SGP-IRT} recovers category‐specific difficulty surfaces with modest spread and no obvious degradation as $C$ increases. Categories 1 and 4, which in our simulation were assigned somewhat smoother GP surfaces, exhibit the lowest median RMSE (around 0.8 and 0.7, respectively), while category 3, whose true $\boldsymbol\beta_{\cdot,3}$ surface was effectively rougher, shows a slightly larger median RMSE. Overall, these results confirm that our approach scales naturally to multinomial outcomes and that the GP prior continues to provide good recovery even when there are more than two response categories.

\paragraph{Predictive performance}
Beyond parameter recovery, we are interested in how well each fitted model translates its latent parameter estimates into concrete predictions. To assess this, we compute posterior-predictive classification accuracy using a decision-theoretic framework. For every observation $n$ in the simulated data, we first obtain its posterior predictive probability of falling in the positive class (here, Category 2). 
Specifically, for each MCMC draw corresponding to the species $s=1,2,\dots,5$, we evaluate $\eta_{n} ^{(s)} = \text{logit}^{-1} \left(\theta_{i(n)} ^{(s)} - \beta_{j(n),2} ^{(s)} + \alpha_{j(n)}^{(s)}x_n \right)$, which represents the success-probability implied by sample $s$ for respondent $i(n)$ answering item $j(n)$ with covariate value $x_n$. We then average over the posterior draws to obtain the posterior mean predictive probability $\hat{\eta}_n = \sum_{s=1}^{5} \eta_n ^{(s)}/5$. Thereafter, we classify the response as positive whenever $\hat{\eta}_n > 0.5$.
Comparing these binary predictions with the true labels yields the \emph{highest} overall accuracy for \texttt{SGP-IRT} (\textbf{0.688}), with \texttt{3PLUS} (0.669) as a close second, followed by \texttt{2PLUS} (0.665) and \texttt{1PLUS} (0.661). This approach properly accounts for parameter uncertainty by using the posterior expectation $\hat{\eta}_n$ rather than relying on a single point estimate.
To provide a more comprehensive evaluation of predictive performance beyond a single threshold, we also compute Receiver Operating Characteristic (ROC) curves for each method. These curves, shown in Figure \ref{fig:roccurve}, are generated by sweeping the classification threshold between $0$ and $1$ and recording the associated true-positive and false-positive rates at each point. The resulting area under the curve (AUC) provides a threshold-free summary of each model's discriminative power. In terms of AUC, \texttt{SGP-IRT} again achieves the highest value (\textbf{0.74}), compared to 0.7182 for \texttt{3PLUS}, 0.7137 for \texttt{2PLUS}, and 0.71 for \texttt{1PLUS}. This consistent pattern across both classification accuracy and AUC metrics confirms the better predictive performance of our proposed method.
\noindent
\begin{figure}[htbp]
    \centering
    \includegraphics[width=0.6\textwidth]{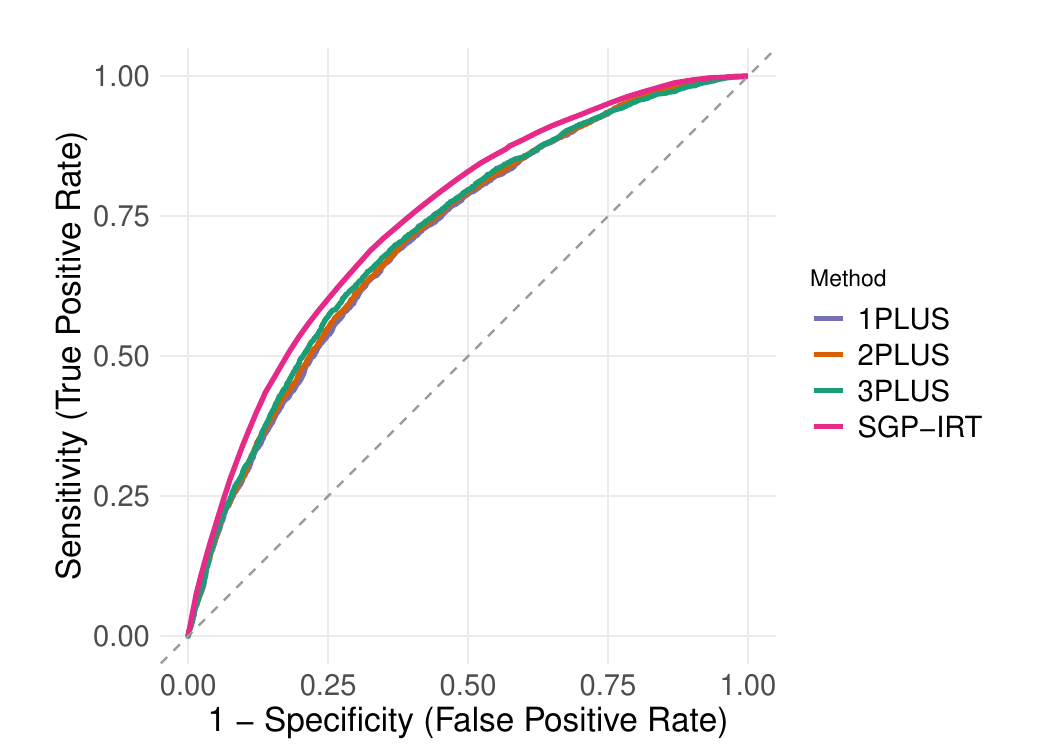}
    \caption{Receiver Operating Characteristic (ROC) curves comparing predictive performance across four spatial IRT methods. Curves are generated by sweeping classification thresholds between 0 and 1 and recording true-positive and false-positive rates for binary classification tasks. The consistent pattern across both classification accuracy and AUC metrics confirms better predictive performance of the proposed method, properly accounting for parameter uncertainty through posterior expectation rather than point estimates.}
    \label{fig:roccurve}
\end{figure}

\section{Empirical Study}
\subsection{Author Recognition Test (ART) Data}

In this empirical study, we applied the \texttt{SGP-IRT} framework to a dataset derived from an Author Recognition Test administered to older adults \citep{Wimmer2023}. ART responses typically reveal latent patterns of participant knowledge and preference, particularly when authors share genre similarities or reader exposure profiles, reflecting underlying cognitive organization of literary knowledge \citep{Moore2015,StratovichART}. However, standard IRT methods treat items independently, ignoring the underlying relationships or similarity between authors that might influence participants' recognition (explored in detail by \citet{mcgeown2020} for instance).
To address this limitation, we begin by exploring the underlying structure of the ART items. We hypothesize that authors belonging to the same genre are not independent entities; rather, they share latent attributes that lead to correlated recognition patterns among respondents. 
By integrating latent spatial structure through a t-SNE embedding and a Gaussian Process prior, we successfully captured the patterns of item difficulty and response similarity across three distinct literary genres: \texttt{SciFiFantasy}, \texttt{SocialPoliticalCommentary}, and \texttt{Suspense}. Our approach illuminated genre-specific response behaviors and provided intuitive visualizations of item difficulties and genre-level operating characteristics. Furthermore, by incorporating respondents' vocabulary scores as a grouping covariate, our \texttt{SGP-IRT} model improved in richness and predictive power, demonstrating enhanced performance in out-of-sample prediction for two of the three genres analyzed. 

%\subsubsection{Dataset description} 
Our dataset comprises responses from older adults (aged approximately 50–80 years, with a total sample size of 321 respondents) who completed various measures of print exposure. We focus on three subsets of the Author Recognition Test (ART): \texttt{SciFiFantasy} (25 items), \\
\texttt{SocialPoliticalCommentary} (10 items), and \texttt{Suspense} (25 items). Each item captures a binary response (0/1), indicating recognition of an author as a genuine fiction writer. During preprocessing, we converted responses to numeric format, removed missing values, and excluded items with zero variance.

\subsubsection{Spatial Structure of Item Difficulty}
Assuming a latent spatial structure that reflects underlying similarities among items, we now examine how this structure informs the estimation of item difficulty parameters through our Gaussian Process prior. This approach allows us to leverage the spatial relationships between items to better understand the difficulty variability across genres.

Figure \ref{fig:postdiff} displays each item positioned according to its t‐SNE coordinates, with color indicating the posterior mean of its difficulty and shape indicating the genre membership. This visualization serves both as a diagnostic of model fit and as a practical map for item selection and interpretation across genres. Items located near each other in t‐SNE space share similar response profiles, while their colors reveal whether they are relatively easy (negative difficulty, shown in purple) or hard (positive difficulty, shown in yellow). In simpler terms, items that respondents tend to treat similarly are pulled close together in the visualization; items answered very differently are pushed far apart.

The most striking finding is that item difficulty is not random; rather, it forms a continuous landscape. The easiest items (darkest purple) cluster in the centre‑left region – predominantly mid‑ranking \texttt{SciFiFantasy} items along with a few \texttt{Suspense} items. In contrast, the hardest items (bright yellow) appear in two distinct areas: along the upper‑left rim of the \texttt{Suspense} cluster and on the far right where many \texttt{SocialPoliticalCommentary} items are located. These yellow areas represent authors who are rarely recognized by participants. Importantly, the color field changes smoothly over the 2‑D manifold, precisely the behavior that our anisotropic squared exponential GP prior is designed to capture. This smooth transition in difficulty levels across the spatial map validates our modeling approach and provides a visually intuitive representation of how item difficulty varies across the latent space.

\begin{figure}[htbp]
    \centering
    \includegraphics[width=0.8\linewidth]{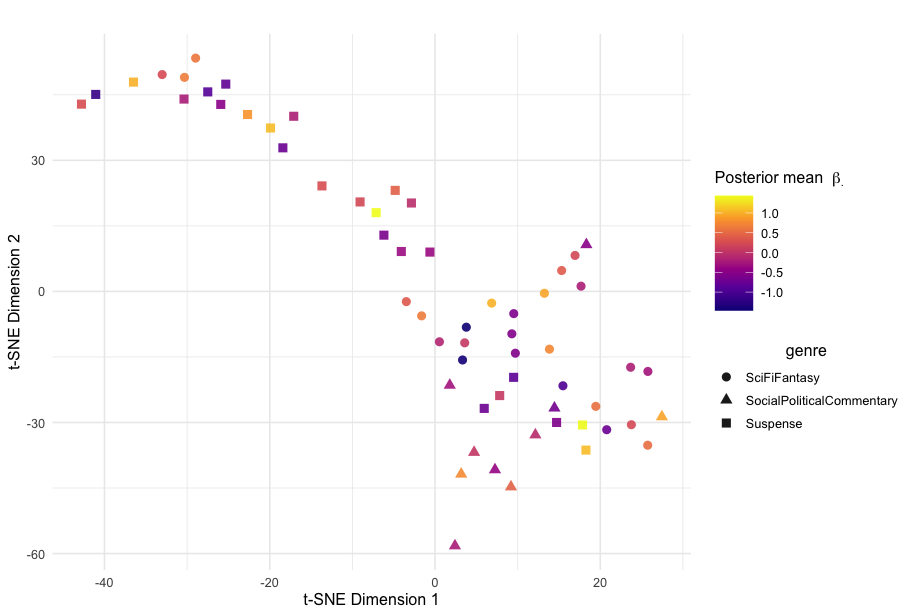}
    \caption{Posterior mean item difficulty across literary genres in t-SNE coordinate space. Each point represents an author item positioned according to t-SNE coordinates derived from response similarity patterns, with color indicating posterior mean difficulty and shape indicating genre membership. this figure tells that item difficulty forms a continuous landscape rather than random variation, with easiest items clustering in the center-left region and hardest items appearing along distinct boundaries. The smooth color transitions validate our anisotropic Gaussian process prior design for capturing spatial dependency structures.}
    \label{fig:postdiff}
\end{figure}

\subsubsection{Genre-level operating characteristics} 
Moving beyond individual items, we now analyze how each genre functions as a whole in measuring participants' recognition abilities. The Item Characteristic Curve (ICC) provides essential information about item behavior across different ability levels. These curves, calculated using $\bar P_{g}(\theta)= {J_g}^{-1}\sum_{j\in g}\mathbb{P}(Y_{ij}=1\mid\theta,j\in g)$, with $J_g$ being the number of items related to the genre $g$, show the probability of a positive response as a function of the latent trait $(\theta)$. Higher curves indicate items that are easier to endorse and are more likely to be recognized by individuals with lower ability levels.
\noindent
\begin{figure}[ht]
    \centering
    \includegraphics[width=0.7\textwidth]{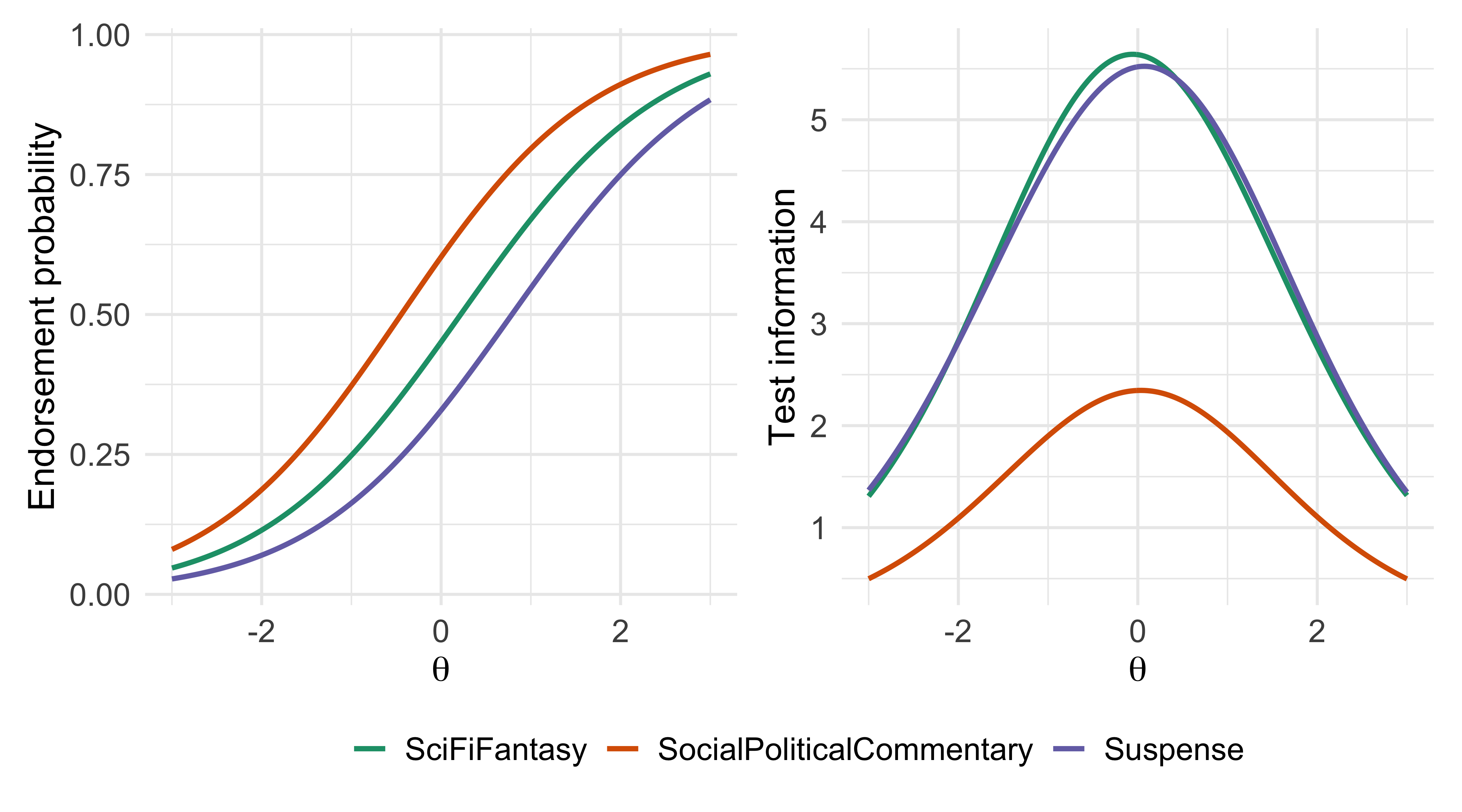}
    \caption{Average endorsement-probability and total information curves. Higher curves in the left panel indicate easier items; steeper slopes indicate greater discrimination. Taller peaks in the right panel imply lower posterior s.e. in $\theta$ estimation near that ability level.}
    \label{fig:combined}
\end{figure}

The left panel of Figure~\ref{fig:combined} reveals distinct patterns across the three genres. \\
The \texttt{SocialPoliticalCommentary} items are easiest overall, while \texttt{Suspense} items are the most difficult. The \texttt{SciFiFantasy} genre occupies an intermediate position between these extremes. The roughly parallel slopes observed for \texttt{SciFiFantasy} and \texttt{Suspense} indicate comparable discrimination power, whereas the upward shift of the \texttt{SocialPoliticalCommentary} curve reflects uniformly lower difficulty but not necessarily greater information content.

To complement these findings and gain deeper insight into each genre's measurement precision, we examine Total Information Curves (TICs). These curves show how much information a test provides $(\mathcal{I})$ about a person's ability at different points along the ability continuum, and are created by summing the individual item information curves for all items in each genre. The right panel of Figure~\ref{fig:combined} also displays these information functions, highlighting important differences between genres. Both \texttt{SciFiFantasy} and \texttt{Suspense} reach similar peaks near $\mathcal I\approx5.4$ at $\theta=0$ (corresponding to a posterior standard error of $\approx0.43$), and maintain substantial information content ($\mathcal I>2$) across a wide ability range ($|\theta|\approx2$). In contrast, \texttt{SocialPoliticalCommentary} peaks at only $\mathcal I\approx2.4$ (standard error $\approx0.64$), offering substantially less precision across the ability spectrum. These findings have direct implications for test design and administration -- for maximally efficient measurement, especially in adaptive testing contexts, we understand that \texttt{SciFiFantasy} and \texttt{Suspense} items should be prioritized early in the assessment process, with \texttt{SocialPoliticalCommentary} items reserved primarily for fine‐tuning once the participant's ability level ($\theta$) has been approximately located.

\subsubsection{Out-of-Sample Prediction Analysis} 
We assessed the out-of-sample predictive accuracy of the competing IRT methods using Leave-One-Out Cross-Validation (LOOCV) along with the Watanabe-Akaike Information Criterion (WAIC). LOOCV evaluates model performance by iteratively omitting each observation, fitting the model to the remaining data, and then assessing how accurately the fitted model predicts the omitted observation. We efficiently approximated these predictive distributions using Pareto-smoothed importance sampling (PSIS) \citep{vehtari2024paretosmoothedimportancesampling}. Table~\ref{tab:loo_waic} summarizes the LOO Information Criterion (LOOIC) and WAIC metrics, both quantifying predictive performance, with lower values indicating better out-of-sample predictive accuracy. The bold-faced entries in Table~\ref{tab:loo_waic} highlight the improved predictive performance within each genre. For \texttt{SciFiFantasy}, \texttt{SGP-IRT} achieves the best predictive fit with the smallest LOOIC (4680.7) and WAIC (4675.3), demonstrating its robustness in capturing the latent spatial structure of author recognition. In the \texttt{SocialPoliticalCommentary} genre, \texttt{2PLUS} attains the best predictive accuracy (LOOIC = 2338.0, WAIC = 2322.6), slightly outperforming \texttt{SGP-IRT}. Interestingly, for \texttt{Suspense}, \texttt{SGP-IRT} outperforms all competing methods (LOOIC = 6349.4, WAIC = 6345.7), highlighting the model's capacity to flexibly adapt to the latent difficulty surfaces across items.

The observed variation in predictive performance across genres aligns with underlying differences in item difficulty structures. Specifically, the improved predictive performance of \texttt{SGP-IRT} in \texttt{SciFiFantasy} and \texttt{Suspense} underscores its effectiveness in modeling smoothly varying latent difficulty surfaces informed by the t-SNE embedding. In contrast, the slightly improved performance of \texttt{2PLUS} for \texttt{SocialPoliticalCommentary} suggests that in contexts with simpler spatial dependence structures, simpler models might achieve similar predictive accuracy without requiring complex Gaussian-process smoothing.

\begin{table}[htbp]
\centering
\caption{Predictive performance comparison using Leave-One-Out Cross-Validation (LOOCV) and Watanabe-Akaike Information Criterion (WAIC) across three literary genres. Lower values indicate superior out-of-sample predictive accuracy. Bold entries highlight best performance within each genre. }
\label{tab:loo_waic}
\begin{tabular}{llrr}
\toprule
\textbf{Genre} & \textbf{Method}  & \textbf{LOOIC} & \textbf{WAIC} \\
\midrule
\multirow{4}{*}{\texttt{SciFiFantasy}}
  & \texttt{1PLUS}     & 4788.2 & 4781.6      \\
  & \texttt{2PLUS}     & 4698.7 & 4688.6      \\
  & \texttt{3PLUS}     & 4686.9 & 4679.5     \\
  & \textbf{\texttt{SGP-IRT}}    & \textbf{4680.7} & \textbf{4675.3} \\
\midrule
\multirow{4}{*}{\texttt{SocialPoliticalCommentary}}
  & \texttt{1PLUS}     & 2405.3 & 2399.9      \\
  & \textbf{\texttt{2PLUS}}     & \textbf{2338.0} & \textbf{2322.6}      \\
  & \texttt{3PLUS}     & 2352.5 & 2343.5      \\
  & \texttt{SGP-IRT}   & 2343.5 & 2335.2 \\
\midrule
\multirow{4}{*}{\texttt{Suspense}}
  & \texttt{1PLUS}      & 6451.4 & 6446.9      \\
  & \texttt{2PLUS}     & 6389.2 & 6381.5      \\
  & \texttt{3PLUS}     & 6371.2 & 6364.2 \\
  & \textbf{\texttt{SGP-IRT}}  & \textbf{6349.4} & \textbf{6345.7}     \\
\bottomrule
\end{tabular}
\end{table}

\section{Discussion}
This study introduces a Gaussian process extension to IRT models that addresses two fundamental limitations in spatial modeling of discrete response data: the restriction to binary outcomes and the inflexibility of conditional autoregressive priors. Our theoretical and empirical results demonstrate that \texttt{SGP-IRT} provides improved parameter recovery and maintains competitive predictive performance across diverse applications, from ecological citizen science to psychometric assessment.

The theoretical foundation established through Theorems \ref{thm:GPcontraction} and \ref{thm:CARminimax} reveals a trade-off in spatial smoothing approaches. While CAR priors impose first-order Markov constraints that limit their ability to capture long-range dependencies, Gaussian processes with squared-exponential kernels can adapt to unknown smoothness levels and exploit spatial correlations at multiple scales. The convergence rate advantage of GP priors becomes increasingly important as datasets grow larger---a particularly relevant finding given the expanding scale of modern survey and citizen science initiatives. The anisotropic kernel specification offers practical advantages beyond theoretical optimality, allowing the model to discover and exploit asymmetries in latent spatial dimensions without requiring domain-specific tuning.

Our methodological innovations extend beyond theoretical improvements to address practical measurement challenges. The extension to polytomous responses enables modeling of rich categorical data structures that are increasingly common in practical applications. More significantly, our use of latent spatial coordinates derived from response similarity matrices fundamentally expands the scope of spatial IRT beyond geographical applications. This innovation enables spatial modeling in abstract spaces where traditional coordinate systems are unavailable, as demonstrated in our Author Recognition Test analysis where genre-based spatial structures emerge naturally from response patterns. The empirical results illuminate important performance boundaries that guide practical model selection. \texttt{SGP-IRT}'s advantage in parameter recovery reflects its ability to borrow strength across spatially proximate items, particularly beneficial when individual items receive sparse observations. However, these gains come with computational considerations that practitioners must weigh. The Stan implementation leverages efficient Hamiltonian Monte Carlo sampling, though costs scale cubically with spatial locations due to covariance matrix operations. 

Several extensions merit future investigation, including incorporating temporal dynamics for longitudinal applications, developing non-stationary covariance functions for domains with varying correlation structures, and exploring alternative dimensionality reduction techniques beyond t-SNE. The framework also suggests natural extensions to network-structured data and hierarchical spatial models that accommodate multiple scales of dependence.

\section{Acknowledgment}
We would like to express our sincere appreciation to Dr. Daniel Bolt for his insightful discussion and invaluable suggestions, which have been crucial in shaping the direction and improving the quality of this study. We also appreciate the Center for High Throughput Computing (CHTC) in the Department of Computer Science at UW-Madison for their technical support in this research \citep{https://doi.org/10.21231/gnt1-hw21}.

% Use this for natbib with APA style
\bibliographystyle{apalike}  
\bibliography{bibliography}

\appendix

\section{Proofs}
\label{sec:proofs}
We present the proofs of our theoretical results in this section. 
\subsection{Proof of Theorem~\ref{thm:GPcontraction}}
\label{subsec:proof_GP_contraction}
We prove the result in five steps. We start by proving the following auxiliary lemmas and then stitch them together via the general posterior-contraction theorem of \citet{Ghosal2000}.

\subsubsection{Auxiliary lemmas}
\label{subsec:proof_lemmas}

% \paragraph{Lemma 1 (metric entropy of the RKHS unit ball).}
\begin{lemma}[Metric entropy of the RKHS unit ball]\label{lem:metricentropy}
Let $\mathbb H_1=\{f\in\mathcal H_k:\|f\|_{\mathcal H_k}\le1\}$ be the
unit ball in the RKHS of the anisotropic squared–exponential kernel $k$ in (A3).
For every fixed $\nu>0$
\[
  \log N\!\bigl(\delta,\mathbb H_1,\|\cdot\|_{\infty}\bigr)
     \;\le\; C_\nu\,\delta^{-D/\nu},
  \qquad 0<\delta<\tfrac12. 
\]
\end{lemma}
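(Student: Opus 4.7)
The plan is to embed the RKHS unit ball $\mathbb{H}_1$ into an ordinary Hölder ball of smoothness $\nu$ and then invoke the classical Kolmogorov--Tikhomirov entropy estimate. The key observation is that the kernel $k$ in assumption (A3) is $C^\infty$ with a Gaussian Fourier transform, so the functions in $\mathcal{H}_k$ are in fact analytic; the polynomial bound claimed in the lemma is a deliberately loose consequence of this much stronger smoothness, chosen because it is exactly what is needed in the subsequent sieve argument.

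First, I would record the spectral representation of the RKHS. The anisotropic squared-exponential kernel is translation invariant with Fourier transform
$$\hat k(\boldsymbol{\omega}) \;=\; \sigma_{\mathrm{gp}}^{2}(2\pi)^{D/2}\Bigl(\textstyle\prod_{d}\lambda_{d}\Bigr)\exp\Bigl(-\tfrac{1}{2}\boldsymbol{\omega}^{\top}\Lambda\,\boldsymbol{\omega}\Bigr),$$
so by Bochner's theorem the RKHS norm admits the Fourier characterization $\|f\|_{\mathcal H_k}^{2}=(2\pi)^{-D}\int |\hat f(\boldsymbol\omega)|^{2}/\hat k(\boldsymbol\omega)\,d\boldsymbol\omega$. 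Consequently, $f\in\mathbb H_{1}$ forces $\hat f$ to decay super-exponentially in every coordinate direction, at rates controlled by the length-scales $\lambda_{d}$.

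Second, I would show a continuous embedding $\mathbb H_{1}\hookrightarrow\mathcal C^{\nu}(B_{\nu})$ for an explicit constant $B_{\nu}=B_\nu(\nu,D,\boldsymbol\lambda,\sigma_{\mathrm{gp}})$. For any multi-index $\alpha$ with $|\alpha|\le\lfloor\nu\rfloor$ and any $\mathbf x$, Fourier inversion and Cauchy--Schwarz give
$$|\partial^{\alpha}f(\mathbf x)|\;\le\;(2\pi)^{-D/2}\int|\boldsymbol\omega^{\alpha}\hat f(\boldsymbol\omega)|\,d\boldsymbol\omega\;\le\;\|f\|_{\mathcal H_{k}}\Bigl((2\pi)^{-D/2}\!\int \boldsymbol\omega^{2\alpha}\hat k(\boldsymbol\omega)\,d\boldsymbol\omega\Bigr)^{1/2},$$
and the right-hand integral is finite because $\hat k$ decays like a Gaussian. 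The Hölder modulus of the top-order derivatives follows by applying the same estimate to the first-order divided differences $\partial^{\alpha}f(\mathbf s)-\partial^{\alpha}f(\mathbf t)$, using the elementary inequality $|e^{i\boldsymbol\omega^{\top}\mathbf s}-e^{i\boldsymbol\omega^{\top}\mathbf t}|\le \|\boldsymbol\omega\|^{\nu-\lfloor\nu\rfloor}\|\mathbf s-\mathbf t\|^{\nu-\lfloor\nu\rfloor}\wedge 2$. Combining these bounds yields $\|f\|_{\mathcal C^{\nu}}\le B_{\nu}\|f\|_{\mathcal H_{k}}$, so $\mathbb H_{1}\subseteq \mathcal C^{\nu}(B_{\nu})$.

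Third, I would invoke the Kolmogorov--Tikhomirov entropy bound for Hölder balls on the compact domain $[0,1]^{D}$ (see, e.g., \citet{Tsybakov2009}, Cor.~2.7.2):
$$\log N\!\bigl(\delta,\mathcal C^{\nu}(B_{\nu}),\|\cdot\|_{\infty}\bigr)\;\le\;C'_{\nu}\,(B_{\nu}/\delta)^{D/\nu},$$
which, composed with the embedding from the previous step, delivers the stated bound with $C_{\nu}=C'_{\nu}B_{\nu}^{D/\nu}$. The main bookkeeping obstacle is the explicit dependence of $B_{\nu}$ on the length-scales $\lambda_{d}$: the anisotropic moment integrals produce factors of $\lambda_{d}^{-2\alpha_{d}-1}$ that must be finite for every admissible $\alpha$. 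Since the lemma only requires a finite constant for each fixed $\nu$, no uniformity in $\nu$ is needed and the verification reduces to observing that all moments of an anisotropic Gaussian density are finite.
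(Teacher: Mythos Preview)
Your argument is correct but follows a genuinely different route from the paper. You embed $\mathbb H_{1}$ into the H\"older ball $\mathcal C^{\nu}(B_\nu)$ via the Bochner/Fourier representation of $\mathcal H_k$ together with Cauchy--Schwarz, and then apply the Kolmogorov--Tikhomirov bound $\log N(\delta,\mathcal C^{\nu}(B_\nu),\|\cdot\|_\infty)\lesssim\delta^{-D/\nu}$ directly. The paper instead works with the discrete Fourier expansion of $k$ on the torus $\mathbb T^{D}$, truncates the coefficients at level $K(\delta)\asymp\sqrt{\log(1/\delta)}$, bounds the entropy of the resulting finite-dimensional set, and obtains the sharper polylogarithmic estimate $\log N(\delta,\mathbb H_{1},\|\cdot\|_\infty)\lesssim(\log(1/\delta))^{D/2+1}$; the polynomial bound $\delta^{-D/\nu}$ is then deduced as a deliberately crude majorant. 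Your approach is more modular---two classical ingredients composed---and avoids the explicit truncation construction; the paper's approach yields the true (polylogarithmic) entropy rate of the Gaussian RKHS as a by-product, which makes transparent \emph{why} the bound holds for every $\nu$ simultaneously. Both lead to the same conclusion and are adequate for the downstream sieve argument. A minor quibble: the corollary number you cite for the H\"older entropy bound corresponds to van der Vaart and Wellner rather than Tsybakov, but the result itself is standard.
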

\begin{proof}
We follow a detailed Fourier-truncation construction that appears in \citet[Sect.~3.2]{JMLR:v12:vandervaart11a}. We work on the torus $\mathbb T^{D}=[0,1]^D$ and write $e_{\boldsymbol{\kappa}}(\mathbf s)=e^{2\pi i\langle\boldsymbol{\kappa},\mathbf s\rangle}$ for $\boldsymbol{\kappa}\in\mathbb Z^{D}$.
The squared–exponential kernel admits the Fourier expansion
\[
  k(\mathbf s,\mathbf t)
  =\sum_{\boldsymbol{\kappa}\in\mathbb Z^{D}}
  \lambda_{\boldsymbol{\kappa}}\,e_{\boldsymbol{\kappa}}(\mathbf s)\, \overline{e_{\boldsymbol{\kappa}}(\mathbf t)},
  \qquad
  \lambda_{\boldsymbol{\kappa}}
  =\sigma_{\mathrm{gp}}^{2}
  \prod_{d=1}^{D}\lambda_{d}\;
  (2\pi)^{-D/2}
  \exp \!\bigl(-2\pi^{2}\langle \boldsymbol{\kappa},
  \boldsymbol\Lambda\boldsymbol{\kappa}\rangle\bigr).
\]
Hence every $f\in\mathcal H_k$ has Fourier coefficients $(c_{\boldsymbol{\kappa}})_{\boldsymbol{\kappa}\in\mathbb Z^{D}}$ that satisfy
\begin{equation}
  f=\sum_{\boldsymbol{\kappa}}c_{\boldsymbol{\kappa}}e_{\boldsymbol{\kappa}},
  \qquad
  \|f\|_{\mathcal H_k}^{2}
  =\sum_{\boldsymbol{\kappa}}\frac{|c_{\boldsymbol{\kappa}}|^{2}}{\lambda_{\boldsymbol{\kappa}}}
  \le1.
\end{equation}
Because $\lambda_{\boldsymbol{\kappa}}$ is strictly positive, this implies
\begin{equation}
  |c_{\boldsymbol{\kappa}}|
  \;\le\;\lambda_{\boldsymbol{\kappa}}^{1/2}
  =C_0\exp \ \!\bigl(-\alpha\|\boldsymbol{\kappa}\|^{2}\bigr),
  \qquad
  C_0=\sigma_{\mathrm{gp}} \ \!\Bigl(\prod_d\lambda_d
  (2\pi)^{-D/2}\Bigr)^{\!1/2},
  \;
  \alpha=\pi^{2}\min_d\lambda_d^{-1}>0.
\end{equation}
Fix $\delta\in(0,\tfrac12)$ and choose the truncation level
\begin{equation}
  K(\delta):=\Bigl\lceil
  \sqrt{\tfrac1\alpha\log\!\bigl(C_1/\delta\bigr)}
  \Bigr\rceil,
  \qquad
  C_1:=4C_0\sum_{\boldsymbol{\kappa}\in\mathbb Z^{D}}
  e^{-\alpha\|\boldsymbol{\kappa}\|^{2}}
  <\infty.
\end{equation}
Split $f=f_{K}+r_{K}$ where $f_{K}=\sum_{\|\boldsymbol{\kappa}\|_\infty\le K(\delta)}c_{\boldsymbol{\kappa}}e_{\boldsymbol{\kappa}}$. The remainder term is bounded in the sup-norm by
\begin{equation}\label{eq:rkbound_corrected}
  \|r_{K}\|_{\infty}
  \;\le\;\sum_{\|\boldsymbol{\kappa}\|_\infty> K(\delta)}|c_{\boldsymbol{\kappa}}|
  \;\le\;C_0\!\!\sum_{\|\boldsymbol{\kappa}\|_\infty> K(\delta)}
  e^{-\alpha\|\boldsymbol{\kappa}\|^{2}}
  \;\le\;\delta/2,
\end{equation}
by our choice of $K(\delta)$. Let $\mathcal{F}_K = \{f_K : f \in \mathbb{H}_1\}$ be the set of truncated functions. The number of retained Fourier modes is $m:=(2K(\delta)+1)^{D}=\mathcal{O}\bigl((\log(1/\delta))^{D/2}\bigr)$. For any $f_K \in \mathcal{F}_K$, we can bound its sup-norm using the Cauchy-Schwarz inequality:
\begin{align*}
    \|f_K\|_{\infty} \le \sum_{\|\boldsymbol{\kappa}\|_\infty \le K(\delta)} |c_{\boldsymbol{\kappa}}| 
    &= \sum_{\|\boldsymbol{\kappa}\|_\infty \le K(\delta)} \frac{|c_{\boldsymbol{\kappa}}|}{\sqrt{\lambda_{\boldsymbol{\kappa}}}} \sqrt{\lambda_{\boldsymbol{\kappa}}} \\
    &\le \left( \sum_{\|\boldsymbol{\kappa}\|_\infty \le K(\delta)} \frac{|c_{\boldsymbol{\kappa}}|^2}{\lambda_{\boldsymbol{\kappa}}} \right)^{1/2} \left( \sum_{\|\boldsymbol{\kappa}\|_\infty \le K(\delta)} \lambda_{\boldsymbol{\kappa}} \right)^{1/2} \\
    &\le \|f\|_{\mathcal{H}_k} \left( \sum_{\boldsymbol{\kappa}\in\mathbb{Z}^D} \lambda_{\boldsymbol{\kappa}} \right)^{1/2} \le M,
\end{align*}
for some constant $M$ since the sum of eigenvalues is finite for the squared-exponential kernel. The set $\mathcal{F}_K$ is a subset of an $m$-dimensional vector space where all elements are uniformly bounded by $M$. A standard result for the metric entropy of a bounded set in a finite-dimensional space gives
\[
  \log N\bigl(\delta/2, \mathcal{F}_K, \|\cdot\|_\infty\bigr)
  \le m \log\left(\frac{4M}{\delta}\right)
  \lesssim \left(\log\frac{1}{\delta}\right)^{D/2} \log\frac{1}{\delta} = \left(\log\frac{1}{\delta}\right)^{D/2+1}.
\]
Since any function $f \in \mathbb{H}_1$ can be approximated by some $f_K \in \mathcal{F}_K$ within $\delta/2$, a $\delta/2$-cover for $\mathcal{F}_K$ serves as a $\delta$-cover for $\mathbb{H}_1$. Thus,
\[
  \log N\bigl(\delta,\mathbb H_1,\|\cdot\|_{\infty}\bigr)
  \;\le\;
  \log N\bigl(\delta/2, \mathcal{F}_K, \|\cdot\|_\infty\bigr)
  \;\lesssim\;
  \bigl(\log\tfrac1\delta\bigr)^{D/2+1}.
\]
Because $(\log(1/\delta))^{p}=o(\delta^{-q})$ as $\delta\downarrow0$ for any $p,q>0$, we may bound the polylogarithmic term by a polynomial one:
$(\log(1/\delta))^{D/2+1}\le C_\nu\delta^{-D/\nu}$ for every fixed $\nu>0$ once $\delta$ is sufficiently small, which establishes the bound.
\end{proof}
% \paragraph{Lemma 2 (small-ball / KL mass).}
\begin{lemma}[KL mass]\label{lem:KLmass}
For every $0<\varepsilon<1/2$ let
$\mathcal K_J(\varepsilon)=
 \{f:\operatorname{KL}(\mathbb{P}_{f_0}^J,\mathbb{P}_f^J)<J\varepsilon^2\}$.
Then
\[
  \Pi\bigl(\mathcal K_J(\varepsilon)\bigr)
     \;\ge\; \exp \ \!\bigl(-C\,J\varepsilon^{2}\bigr).  
\]
\end{lemma}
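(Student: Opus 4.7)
The plan is to apply the concentration-function framework of \citet{vandervaartvanzanten_2008,JMLR:v12:vandervaart11a}, tailored to the reduced Gaussian regression model furnished by the LAN expansion (\ref{eq:LAN}). Conditional on the nuisance parameters $(\theta_i,\alpha_j,\boldsymbol\gamma)$, the log-likelihood becomes a Gaussian location model in $\boldsymbol\beta_{\cdot c}$ with noise variance of order $(I\tilde w_j)^{-1}$. For such Gaussian models the KL divergence is quadratic in the mean difference, so, using the regular-grid design (A1) to pass from a lattice sum to a sup norm,
\[
  \operatorname{KL}\bigl(\mathbb{P}_{f_0}^{J},\mathbb{P}_f^{J}\bigr)\;\le\;C_1\,J\,\|f-f_0\|_\infty^{2}.
\]
Hence it suffices to establish the sup-norm small-ball bound $\Pi(\|f-f_0\|_\infty<c\varepsilon)\ge\exp(-CJ\varepsilon^2)$.

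The second step is to invoke the standard concentration-function inequality. For any centered GP with RKHS $\mathcal H_k$,
\[
 -\log\Pi\bigl(\|f-f_0\|_\infty<\delta\bigr)\;\le\;\phi_{f_0}(\delta/2),\qquad
 \phi_{f_0}(\delta)\;=\;\inf_{h\in\mathcal H_k,\ \|h-f_0\|_\infty<\delta}\!\|h\|_{\mathcal H_k}^{2}\;-\;\log\Pi\bigl(\|f\|_\infty<\delta\bigr).
\]
I would then bound the two summands separately. For the centered small-ball term, the exponentially decaying Fourier eigenvalues already computed in Lemma \ref{lem:metricentropy} deliver the polylogarithmic estimate $-\log\Pi(\|f\|_\infty<\delta)\lesssim(\log(1/\delta))^{1+D/2}$. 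For the approximation term, convolve $f_0\in\mathcal C^{\nu}(B)$ with a direction-wise mollifier at bandwidth of order $\delta^{1/\nu}$ to produce $h_\delta\in\mathcal H_k$ satisfying $\|h_\delta-f_0\|_\infty\lesssim\delta$ and $\|h_\delta\|_{\mathcal H_k}^{2}\lesssim\delta^{-D/\nu}$; this again exploits the Fourier representation of the anisotropic squared-exponential kernel developed in Lemma \ref{lem:metricentropy} and the Hölder smoothness assumption (A2).

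Combining the two pieces yields $\phi_{f_0}(\varepsilon)\lesssim\varepsilon^{-D/\nu}+(\log(1/\varepsilon))^{1+D/2}$. At the target rate $\varepsilon=\epsilon_J=J^{-\nu/(2\nu+D)}$ the first term is exactly of order $J\varepsilon^{2}$ and dominates the polylogarithmic contribution, which produces $\Pi(\mathcal K_J(\varepsilon))\ge\exp(-CJ\varepsilon^{2})$ for an appropriate constant $C$ and for $\varepsilon$ at least of order $\epsilon_J$. The main obstacle is the RKHS approximation bound: the anisotropy forces the mollifier bandwidth to depend on direction, and one must carefully verify that the Fourier coefficients of $h_\delta$ remain summable against the reciprocal eigenvalues $\lambda_{\boldsymbol\kappa}^{-1}$, since loose control would inflate $\|h_\delta\|_{\mathcal H_k}^{2}$ above $\delta^{-D/\nu}$ and break the desired KL-mass rate.
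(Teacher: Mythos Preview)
Your proposal is correct and follows essentially the same route as the paper. Both reduce the KL neighbourhood to a small-ball event in a norm (the paper uses $\|\cdot\|_2$, you use $\|\cdot\|_\infty$; either works since $\|\cdot\|_J\le\|\cdot\|_2\le\|\cdot\|_\infty$ on $[0,1]^D$), and both invoke the small-ball/concentration-function machinery of \citet{JMLR:v12:vandervaart11a} to obtain a bound of order $\varepsilon^{-D/\nu}$, then compare this to $J\varepsilon^2$ at $\varepsilon\ge\epsilon_J$. The only difference is presentational: the paper cites the small-ball estimate $\Pi(B_2(f_0,r))\ge\exp(-C_1 r^{-D/\nu})$ as a black box from that reference, whereas you unpack it into its two constituent pieces (centered small-ball $\lesssim(\log(1/\delta))^{1+D/2}$ and RKHS approximation $\lesssim\delta^{-D/\nu}$) via the concentration function $\phi_{f_0}$. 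Your explicit decomposition is more informative and makes the role of the H\"older assumption (A2) transparent; the paper's citation is terser but relies on the reader knowing that lemma. The technical caveat you flag about the anisotropic mollifier is real but benign: since $\Lambda$ is fixed and nondegenerate, the directional length-scales only change constants, not rates, so $\|h_\delta\|_{\mathcal H_k}^2\lesssim\delta^{-D/\nu}$ goes through exactly as in the isotropic case.
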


\begin{proof}  
Under the working model
\(
  y_j=f(\mathbf s_j)+\varepsilon_j,\;
  \varepsilon_j\stackrel{\text{i.i.d.}}{\sim}\mathcal N(0,\sigma^{2}/I),
\)
the Kullback–Leibler distance between
$\mathbb{P}_{f_0}^{J}$ and $\mathbb{P}_{f}^{J}$ is
\begin{equation}\label{eq:KLcondition}
\operatorname{KL}\!\bigl(\mathbb{P}_{f_0}^{J},\mathbb{P}_{f}^{J}\bigr)
   =\frac{IJ}{2\sigma^{2}}\,
      \bigl\|f-f_0\bigr\|_{J}^{2}
   \;\le\;
    \frac{IJ}{2\sigma^{2}}\,
      \bigl\|f-f_0\bigr\|_{2}^{2},
\end{equation}
where the last inequality uses $\|g\|_{J}\le\|g\|_{2}$ on the unit
cube. Now, we fix the constant \(   c_0\;:=\;\sqrt{\frac{2\sigma^{2}}{I}}.
\) Then from Equation \ref{eq:KLcondition}, \(
   \|f-f_0\|_{2}<c_0\,\varepsilon
   \;\Longrightarrow\;
   \operatorname{KL}(\mathbb{P}_{f_0}^{J},\mathbb{P}_{f}^{J})<J\varepsilon^{2}.
\) Hence
\begin{equation}\label{eq:ballnbd}
      B_{2}\!\bigl(f_0,c_0\varepsilon\bigr)
       \subseteq\mathcal K_J(\varepsilon),
  \quad
  B_{2}(f_0,r):=\{f:\|f-f_0\|_{2}<r\}.
\end{equation}
Using Lemma 3 in \citet{JMLR:v12:vandervaart11a}, there exists
$C_1>0$ such that for \emph{every} $r\in(0,\tfrac12)$,
\begin{equation}\label{eq:small_ball}
  \Pi\bigl(B_{2}(f_0,r)\bigr)
    \;\ge\; \exp \ \!\bigl(-C_1\,r^{-D/\nu}\bigr).
\end{equation}
Take $r=c_0\varepsilon$ in \eqref{eq:small_ball} and combine with
\eqref{eq:ballnbd}:
\[
  -\log\Pi\bigl(\mathcal K_J(\varepsilon)\bigr)
  \;\le\; C_1(c_0\varepsilon)^{-D/\nu}.
\]
Now compare the exponents $J\varepsilon^{2}$ and
$\varepsilon^{-D/\nu}$.  Because
\(
   \varepsilon_J^{2+D/\nu}=J^{-1},
\)
for every $\varepsilon\ge\varepsilon_J$ one has
\(
  \varepsilon^{-D/\nu}\le J\varepsilon^{2}.
\)
Therefore choosing $C:=C_1c_0^{-D/\nu}$ gives
\[
  -\log\Pi\bigl(\mathcal K_J(\varepsilon)\bigr)
   \;\le\; C\,J\varepsilon^{2},
\]
which is exactly the bound in Lemma \ref{lem:KLmass}.  
The inequality is trivial for $\varepsilon\ge\tfrac12$ because the KL
neighbourhood has probability 1. 
\end{proof}

\begin{lemma}[existence of exponentially powerful tests]\label{lem:exptests}
For any $r>0$ there exists a measurable function
$\phi_J:\mathbb R^J\to\{0,1\}$ such that
\[
  \mathbb{P}_{f_0}^J(\phi_J=1)+
  \sup_{\|f-f_0\|_{J}\,\ge\,2r}\mathbb{P}_f^J(\phi_J=0)
  \;\le\; \exp\ \!\bigl(-c\,I\,J\,r^2\bigr).  
\]
\end{lemma}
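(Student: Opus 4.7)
The strategy is a classical sieve-plus-covering construction in the spirit of \citet{Ghosal2000}, tailored to the Gaussian-regression reduction already in hand: conditional on the respondent abilities, the data $(y_{jc})_{j=1}^{J}$ look like $f_{0}(\mathbf s_j)$ perturbed by i.i.d.\ $\mathcal N(0,\sigma^{2}/I)$ noise. The plan is (i) to build a Neyman--Pearson test against each \emph{fixed} alternative $f_1$ that is empirically separated from $f_0$, (ii) to cover the alternative set by a $\|\cdot\|_J$-net inside a suitable sieve, (iii) to take the pointwise maximum of the per-point tests, and (iv) to trade off the resulting union bound against the exponential decay of each individual test.

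For step (i), fix $f_1$ with $\|f_1-f_0\|_J\ge\rho$, put $\mathbf t=(f_1(\mathbf s_j)-f_0(\mathbf s_j))_{j=1}^{J}\in\mathbb R^{J}$, and consider the linear test
\[
\phi_{f_1}(\mathbf y)=\mathbbm{1}\!\left\{\mathbf t^{\top}\!\bigl(\mathbf y-\tfrac12(f_0+f_1)\bigr)>0\right\}.
\]
Under $\mathbb P_{f_0}^{J}$ the statistic is Gaussian with mean $-\tfrac12\|\mathbf t\|_{2}^{2}=-\tfrac12 J\|f_1-f_0\|_J^{2}$ and variance $(\sigma^{2}/I)\|\mathbf t\|_{2}^{2}$, so the standard Gaussian tail bound yields $\mathbb P_{f_0}^{J}(\phi_{f_1}=1)\le\exp\!\bigl(-IJ\rho^{2}/(8\sigma^{2})\bigr)$. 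An analogous calculation shows the type-II error $\mathbb P_{f}^{J}(\phi_{f_1}=0)$ is controlled by the same exponential, uniformly over all $f$ with $\|f-f_1\|_J\le\rho/4$, after slightly relaxing the numerical constant.

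For steps (ii)--(iii), restrict attention to the sieve $\mathcal F_J=M_J\,\mathbb H_1$, the RKHS ball of radius $M_J$ chosen large enough that the GP prior assigns exponentially small mass to $\mathcal F_J^{\,c}$ via Borell's inequality \citep{JMLR:v12:vandervaart11a}. Partition the alternative set into dyadic shells $S_\ell=\{f\in\mathcal F_J:2^{\ell}r\le\|f-f_0\|_J<2^{\ell+1}r\}$ for $\ell\ge 1$, and invoke Lemma~\ref{lem:metricentropy} to cover each shell by a net of mesh proportional to $2^{\ell}r$, of cardinality $N_\ell$ with $\log N_\ell\lesssim M_J^{D/\nu}(2^{\ell}r)^{-D/\nu}$. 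Take $\phi_J$ to be the maximum of $\phi_{f_1}$ over every net centre in every shell. A union bound then yields
\[
\mathbb P_{f_0}^{J}(\phi_J=1)\;\le\;\sum_{\ell\ge 1}N_\ell\exp\!\bigl(-c\,IJ\,(2^{\ell}r)^{2}\bigr),\qquad
\sup_{\substack{f\in\mathcal F_J\\ \|f-f_0\|_J\ge 2r}}\mathbb P_{f}^{J}(\phi_J=0)\;\le\;\exp\!\bigl(-c\,IJ\,r^{2}\bigr),
\]
with the type-II supremum attained on the innermost shell.

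The main technical obstacle is controlling the shell sum in the type-I bound so that it is dominated by its first term; this requires $\log N_\ell=o\bigl(IJ(2^\ell r)^{2}\bigr)$ uniformly in $\ell$, i.e.\ $IJ\,r^{2+D/\nu}\to\infty$. Since the lemma will be invoked only at radii $r\gtrsim\varepsilon_J=J^{-\nu/(2\nu+D)}$ coming from Theorem~\ref{thm:GPcontraction}, one checks that $r^{2+D/\nu}\gtrsim J^{-1}$, reducing the requirement to the large-sample condition $I\to\infty$. In that regime the polynomial metric-entropy factor is crushed by the exponential decay, the shell sum collapses into a geometric series bounded by a constant multiple of its first term, and absorbing that multiplicative factor into $c$ delivers the claimed bound; the contribution from $\mathcal F_J^{\,c}$ is handled separately in the main contraction argument by Borell's inequality.
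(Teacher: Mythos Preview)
Your covering-and-union approach is the standard Ghosal--Ghosh--van der Vaart recipe, and it is essentially correct for the purposes of Theorem~\ref{thm:GPcontraction}. It is, however, a genuinely different and heavier route than the paper's.

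The paper builds a \emph{single omnibus test} from the residual sum of squares $T_J=\sum_{j=1}^{J}(y_j-f_0(\mathbf s_j))^2$, rejecting when $T_J$ exceeds a threshold calibrated via the Laurent--Massart concentration inequality for $\chi^2_J$. Under $\mathbb P_{f_0}^J$ the statistic is a scaled central $\chi^2_J$; under any alternative $f$ with $\|f-f_0\|_J\ge 2r$ it is a scaled non-central $\chi^2_J$ with non-centrality $\lambda=IJ\|f-f_0\|_J^{2}/\sigma^{2}\ge 4IJr^2/\sigma^2$, which pushes the distribution above the threshold. Because this $\chi^2$ test is direction-free, it controls type-II error uniformly over \emph{all} $f$ in the alternative set---no sieve, no covering, no union bound---and the lemma holds exactly as stated, for every $r>0$.

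Your construction, by contrast, only delivers the sieve-restricted bound $\sup_{f\in\mathcal F_J,\ \|f-f_0\|_J\ge 2r}\mathbb P_f^J(\phi_J=0)\le\exp(-cIJr^2)$, as you explicitly acknowledge. That is enough for the contraction theorem once paired with the Borell-type tail bound $\Pi(\mathcal F_J^{\,c})\le\exp(-C_2J\varepsilon_J^{2})$ of Lemma~\ref{lem:sieveentropy}, but it does not prove the lemma as written. Your type-I control also needs the entropy--exponential balance $\log N_\ell\lesssim IJ(2^\ell r)^2$; when you reduce this to ``$I\to\infty$'' you have silently dropped the $M_J^{D/\nu}$ factor from the entropy, so the actual requirement is $I\gg M_J^{D/\nu}$ at $r\asymp\varepsilon_J$. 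The paper's $\chi^2$ argument sidesteps all of this.

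What your route \emph{does} buy is portability: the likelihood-ratio-plus-covering template transfers verbatim to non-Gaussian observation models where no clean omnibus statistic exists. In the present Gaussian-regression reduction, though, the direct $\chi^2$ test is both shorter and strictly stronger.
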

\begin{proof}
The proof proceeds by constructing an explicit test \(\phi_J\) and separately bounding its Type-I and Type-II error probabilities.

\paragraph{Test Definition.}
Define the test statistic as the sum of squared residuals relative to \(f_0\):
\[
  T_J := \sum_{j=1}^{J} \bigl\{y_j - f_0(\mathbf{s}_j)\bigr\}^2.
\]
We will reject the null hypothesis \(H_0: f=f_0\) if \(T_J\) is too large. Let \(x = \frac{I J r^2}{8\sigma^2}\). We define a deviation term based on the Laurent-Massart concentration inequality:
\[
  D(J, r) := \frac{\sigma^2}{I} \left( 2\sqrt{Jx} + 2x \right) = \frac{\sigma^2}{I} \left( 2\sqrt{J \frac{I J r^2}{8\sigma^2}} + 2 \frac{I J r^2}{8\sigma^2} \right) = J r \sigma \sqrt{\frac{I}{2}} + \frac{I J r^2}{4}.
\]
We set the rejection region \(\mathcal{R}\) using this deviation:
\[
  \mathcal{R} := \left\{ T_J > \frac{J\sigma^2}{I} + D(J, r) \right\}, \quad \text{and let } \phi_J = \mathbf{1}_{\mathcal{R}}.
\]

\paragraph{Type-I Error Bound.}
Under the null hypothesis \(H_0: f=f_0\), we have \(y_j - f_0(\mathbf{s}_j) = \varepsilon_j\) where \(\varepsilon_j \sim \mathcal{N}(0, \sigma^2/I)\).
Let \(Z_j := I\varepsilon_j^2/\sigma^2 \sim \chi^2_1\), so that their sum \(Z := \sum_{j=1}^J Z_j \sim \chi^2_J\). The test statistic can be written as \(T_J = \frac{\sigma^2}{I}Z\). The probability of a Type-I error is:
\[
  \mathbb{P}_{f_0}^J(\phi_J=1) = \mathbb{P}\left( \frac{\sigma^2}{I}Z > \frac{J\sigma^2}{I} + D(J, r) \right) = \mathbb{P}\left( Z - J > \frac{I}{\sigma^2}D(J, r) \right).
\]
By our definition of \(D(J, r)\), the deviation is exactly \(2\sqrt{Jx} + 2x\). We can now directly apply the Laurent-Massart inequality (\(\mathbb{P}(Z-J \ge 2\sqrt{Jx} + 2x) \le e^{-x}\)):
\begin{align*}
  \mathbb{P}_{f_0}^J(\phi_J=1) &\le \exp(-x) \\
  &= \exp\left(-\frac{I J r^2}{8\sigma^2}\right).
\end{align*}
This bounds the Type-I error as required.

\paragraph{Type-II Error Bound.}
Now, fix an alternative hypothesis \(H_1: f\) such that \(\delta := \|f - f_0\|_J \ge 2r\). Under \(H_1\), the term \(y_j - f_0(\mathbf{s}_j) = (f(\mathbf{s}_j) - f_0(\mathbf{s}_j)) + \varepsilon_j\) is a normal variable with mean \(\mu_j = f(\mathbf{s}_j) - f_0(\mathbf{s}_j)\) and variance \(\sigma^2/I\).
Therefore, the test statistic \(T_J\) follows a scaled non-central chi-squared distribution:
\[
  T_J = \sum_{j=1}^J (\mu_j + \varepsilon_j)^2 \sim \frac{\sigma^2}{I} \cdot \chi^2_J(\lambda),
\]
where the non-centrality parameter \(\lambda\) is given by:
\[
  \lambda = \frac{I}{\sigma^2} \sum_{j=1}^J \mu_j^2 = \frac{I J \delta^2}{\sigma^2}.
\]
The probability of a Type-II error is \(\mathbb{P}_f^J(\phi_J=0) = \mathbb{P}_f^J(T_J \le \frac{J\sigma^2}{I} + D(J,r))\).
A non-central chi-squared variable \(\chi^2_J(\lambda)\) can be approximated by a normal distribution for large \(J\) or large \(\lambda\). Specifically, \(\frac{\chi^2_J(\lambda) - (J+\lambda)}{\sqrt{2(J+2\lambda)}} \approx \mathcal{N}(0,1)\). We standardize our variable:
\[
  \mathbb{P}_f^J(\phi_J=0) = \mathbb{P}\left( \frac{\chi^2_J(\lambda) - (J+\lambda)}{\sqrt{2(J+2\lambda)}} \le \frac{(J + \frac{I}{\sigma^2}D(J,r)) - (J+\lambda)}{\sqrt{2(J+2\lambda)}} \right).
\]
The argument of the standard normal CDF \(\Phi(\cdot)\) is \(z = \frac{\frac{I}{\sigma^2}D(J, r) - \lambda}{\sqrt{2(J+2\lambda)}}\). Since \(\delta \ge 2r\), the non-centrality parameter \(\lambda\) is the dominant term, making the numerator negative. For large \(J\), the numerator is approximately \(-\lambda\) and the denominator is approximately \(\sqrt{4\lambda}\). A more careful calculation gives:
\[
  z \approx \frac{-I J \delta^2 / \sigma^2}{\sqrt{4IJ\delta^2/\sigma^2}} = \frac{-I J \delta^2 / \sigma^2}{2\delta\sqrt{IJ}/\sigma} = -\frac{\delta \sqrt{IJ}}{2\sigma}.
\]
Since \(\delta \ge 2r\), we have \(z \le -\frac{2r\sqrt{IJ}}{2\sigma} = -\frac{r\sqrt{IJ}}{\sigma}\).
Using the standard Gaussian tail bound \(\Phi(z) \le e^{-z^2/2}\) for \(z \le 0\), we get:
\begin{align*}
  \mathbb{P}_f^J(\phi_J=0) &\le \exp\left( -\frac{1}{2} \left( \frac{r\sqrt{IJ}}{\sigma} \right)^2 \right) \\
  &= \exp\left( -\frac{I J r^2}{2\sigma^2} \right).
\end{align*}
Combining the two error bounds with a constant \(c = \frac{1}{8\sigma^2}\), we have shown that for any \(r>0\):
\[
  \mathbb{P}_{f_0}^J(\phi_J=1) + \sup_{\|f-f_0\|_{J}\,\ge\,2r}\mathbb{P}_f^J(\phi_J=0) \le 2 \exp\bigl(-c\,I\,J\,r^2\bigr),
\]
This proves the lemma.
\end{proof}

\begin{lemma}[sieve entropy and tail probability]\label{lem:sieveentropy}
Define the rate $\varepsilon_J=J^{-\nu/(2\nu+D)}$ and the sieve
\[
  \mathcal F_J=\bigl\{f:\|f\|_{\mathcal H_k}\le
       R_J:=J^{D/(4\nu+2D)}\bigr\}. 
\]
For $J$ sufficiently large
\[
  \log N\bigl(\varepsilon_J/\sqrt2,\mathcal F_J,\|\cdot\|_{\infty}\bigr)
     \le C_1\,J\varepsilon_J^2,
  \qquad
  \Pi(\mathcal F_J^{c})\le\exp \ \!\bigl(-C_2\,J\varepsilon_J^2\bigr).
\]
\end{lemma}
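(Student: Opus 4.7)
The proof splits into an entropy estimate and a tail probability estimate, and I would handle them in sequence. Both depend on the key algebraic identity $R_J/\varepsilon_J = J^{1/2}$, which follows from $\tfrac{D}{4\nu+2D} + \tfrac{\nu}{2\nu+D} = \tfrac{1}{2}$, and its corollary $R_J^2 = J^{D/(2\nu+D)} = J\varepsilon_J^2$. This single identity is what makes the choice of $R_J$ precisely match the target rate in the bounds.

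For the entropy bound, since $\mathcal{F}_J$ is just a rescaling of the RKHS unit ball, namely $\mathcal{F}_J = R_J \mathbb{H}_1$, the covering-number identity
\[
N\bigl(\varepsilon_J/\sqrt{2}, \mathcal{F}_J, \|\cdot\|_\infty\bigr) = N\bigl(\varepsilon_J/(R_J\sqrt{2}), \mathbb{H}_1, \|\cdot\|_\infty\bigr)
\]
combined with Lemma \ref{lem:metricentropy} applied with a free parameter $\nu' > 0$ yields $\log N \le C_{\nu'} (R_J\sqrt{2}/\varepsilon_J)^{D/\nu'} = O(J^{D/(2\nu')})$. Choosing $\nu'$ larger than $(2\nu+D)/2$ forces this to be $o(J^{D/(2\nu+D)}) = o(J\varepsilon_J^2)$, which is more than enough.

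For the tail bound, I would apply Borell's Gaussian concentration inequality in the form used by van der Vaart and van Zanten, namely
\[
\Pi\bigl((R_J \mathbb{H}_1 + \varepsilon_J \mathbb{B}_1)^c\bigr) \le 1 - \Phi\bigl(R_J + \Phi^{-1}\bigl(\Pi(\varepsilon_J \mathbb{B}_1)\bigr)\bigr),
\]
where $\mathbb{B}_1$ is the sup-norm unit ball. The small-ball probability for the anisotropic squared-exponential GP satisfies $-\log \Pi(\varepsilon_J \mathbb{B}_1) \lesssim (\log(1/\varepsilon_J))^{D+1} = O((\log J)^{D+1})$, which is polylogarithmic in $J$ and therefore dominated by $R_J = J^{D/(4\nu+2D)}$. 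Combining with the Gaussian tail bound $1 - \Phi(t) \le e^{-t^2/2}$ and $\Phi^{-1}(e^{-\psi}) \ge -\sqrt{2\psi}$ produces $\Pi(\mathcal{F}_J^c) \le \exp(-c R_J^2) = \exp(-c J\varepsilon_J^2)$ for any $c < 1/2$ and $J$ sufficiently large.

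The main obstacle is the interpretational subtlety that GP sample paths almost surely have infinite RKHS norm, so the sieve $\{f : \|f\|_{\mathcal{H}_k} \le R_J\}$ literally has zero prior mass. The standard resolution, which I would adopt, is to reinterpret $\mathcal{F}_J$ as the Minkowski sum $R_J\mathbb{H}_1 + \varepsilon_J\mathbb{B}_1$ in the sense of the concentration-function framework of \citet{vandervaartvanzanten_2008}; this does not affect the entropy estimate (the $\varepsilon_J$-inflation changes covering numbers only by a multiplicative constant) but is essential both for giving the sieve positive prior mass and for making Borell's inequality directly applicable. Once this reinterpretation is in place, the remaining ingredient, the polylogarithmic small-ball exponent for the anisotropic squared-exponential kernel, is classical, and the identity $R_J^2 \asymp J\varepsilon_J^2$ closes the argument with room to spare.
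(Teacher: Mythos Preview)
Your proposal is correct, and on the tail-probability half it is actually more careful than the paper's own argument.

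For the entropy bound the two proofs are essentially the same: both rescale $\mathcal{F}_J = R_J\mathbb{H}_1$ and exploit the identity $R_J/\varepsilon_J \asymp J^{1/2}$. The paper then appeals directly to the polylogarithmic entropy estimate $\log N(\delta,\mathbb{H}_1,\|\cdot\|_\infty)\lesssim(\log(1/\delta))^{D/2+1}$ and concludes with $O((\log J)^{D/2+1})=o(J\varepsilon_J^{2})$, whereas you route through the polynomial bound of Lemma~\ref{lem:metricentropy} with a free parameter $\nu'>(2\nu+D)/2$. Either path reaches the target with plenty of slack; yours has the minor virtue of invoking only the lemma already proved in the paper.

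The substantive difference is the tail bound. The paper simply writes $\Pi(\|f\|_{\mathcal{H}_k}>R)\le e^{-R^{2}/2}$ ``by Borell--Sudakov--Tsirelson for a centred Gaussian measure on $\mathcal{H}_k$'' and plugs in $R=R_J$. You correctly flag the subtlety that sample paths of a non-degenerate GP almost surely do \emph{not} lie in its own RKHS, so the literal RKHS ball $\{\|f\|_{\mathcal{H}_k}\le R_J\}$ has prior mass zero and that inequality, read at face value, is vacuous. Your remedy---replace $\mathcal{F}_J$ by the Minkowski inflation $R_J\mathbb{H}_1+\varepsilon_J\mathbb{B}_1$ and apply Borell's inequality in the concentration-function form of \citet{vandervaartvanzanten_2008}, combined with the polylogarithmic small-ball exponent for the squared-exponential kernel---is the standard rigorous route, and the identity $R_J^{2}=J\varepsilon_J^{2}$ closes the argument exactly as you describe. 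In short, the paper's proof silently glosses over the RKHS-norm issue; your version is the one that would survive a referee who remembers Driscoll's zero--one law.
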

\begin{proof}  
For any semi‐norm \(\|\!\cdot\!\|\), radius \(\delta>0\) and
\(R>0\),
\begin{equation}\label{eq:scaling}
   N\bigl(\delta,\,R\mathbb{H}_{1},\|\!\cdot\!\|\bigr)
   \;=\;
   N\!\Bigl(\tfrac{\delta}{R},\,\mathbb{H}_{1},\|\!\cdot\!\|\Bigr),
\end{equation}
because \(\|Rf-Rg\|=R\|f-g\|\).
Now, for the squared–exponential kernel,  
\begin{equation}\label{eq:polylog-entropy}
   \log N\bigl(\delta,\mathbb{H}_{1},\|\cdot\|_{\infty}\bigr)
   \;\le\;C_{*}\,\bigl(\log\tfrac1\delta\bigr)^{\frac{D}{2}+1},
   \qquad 0<\delta<\tfrac12,
\end{equation}
(\citealp[][Lemma 6]{JMLR:v12:vandervaart11a}).
Put \(\delta_{J}=\varepsilon_{J}/\sqrt2\) and
\(\tilde\delta_{J}=\delta_{J}/R_{J}\).
Then
\[
   \log N\bigl(\delta_{J},\mathcal{F}_{J},\|\cdot\|_{\infty}\bigr)
   \;\le\;
   C_{*}\Bigl(\log\tfrac{1}{\tilde\delta_{J}}\Bigr)^{\!\frac{D}{2}+1}.
\]
Because
\(
   \log\tfrac{1}{\tilde\delta_{J}}
   =\tfrac12\log J,
\)
the RHS is \(\mathcal{O}((\log J)^{\frac{D}{2}+1})\).
Since \(J\varepsilon_{J}^{2}=J^{D/(2\nu+D)}\) grows like a power of
\(J\), for large \(J\)
\(
  (\log J)^{\frac{D}{2}+1}
  =o(J\varepsilon_{J}^{2}),
\)
which yields the desired entropy bound with a suitable \(C_{1}\).

The prior tail bound can be provided using the following set of arguments. \\
By the Borell–Sudakov–Tsirelson inequality for a centred Gaussian
measure on \(\mathcal{H}_{k}\),
\[
   \Pi\!\bigl(\|f\|_{\mathcal{H}_{k}}>R\bigr)
   \;\le\;\exp \ \!\bigl(-R^{2}/2\bigr),\qquad R>0.
\]
Taking \(R=R_{J}\) gives
\[
   -\log\Pi\bigl(\mathcal{F}_{J}^{\mathrm c}\bigr)
   \;\ge\;\tfrac12\,R_{J}^{2}
   =\tfrac12\,J^{D/(2\nu+D)}
   =\tfrac12\,J\varepsilon_{J}^{2},
\]
so the claim holds with \(C_{2}=1/2\).
\end{proof}
\begin{proof}[Proof of Theorem \ref{thm:GPcontraction}]
With Lemmas \ref{lem:metricentropy}–\ref{lem:sieveentropy} we verify the three conditions of
\citet[][Thm.~2.1]{Ghosal2000}:

The prior mass condition is verified using Lemma \ref{lem:KLmass} with $\varepsilon=\varepsilon_J/\sqrt2$. The existence of an exponentially powerful test is done using Lemma \ref{lem:exptests} with $r=M\varepsilon_J$ and $M>0$ large enough. Finally, the entropy and tail probability bounds are provided in Lemma \ref{lem:sieveentropy}.

Hence there exists $M>0$ such that
\[
  \Pi\!\bigl(\|f-f_0\|_{2}>M\varepsilon_J
             \,\bigm|\,y_{1:J}\bigr)
     \;\xrightarrow{\mathbb{P}_{f_0}}\;0.
\]
Because $\|g\|_2\le\|g\|_\infty$ on $[0,1]^D$, the above
implies the $L^2$ statement in Theorem~\ref{thm:GPcontraction}.  
\end{proof}
\subsection{Proof of Theorem~\ref{thm:CARminimax}}
\label{subsec:proof_CAR_lower}
We keep the regression reduction
\(
  y_j=f_0(\mathbf s_j)+\varepsilon_j,
  \ \varepsilon_j\stackrel{\text{i.i.d.}}{\sim}\mathcal N(0,\sigma^2/I),
\)
on the regular $m^{D}=J$ lattice of Assumption (A1). The first half of the proof of Theorem \ref{thm:CARminimax} uses Le Cam's two-point method to establish the fastest possible convergence rate any estimator can achieve for this problem. The second half, i.e., Proposition \ref{prop:CAR_rate} proves that the CAR posterior mean can actually achieve this rate. We start by proving the two auxiliary ingredients, Lemma \ref{lem:orthbasis} and Lemma \ref{lem:two_point} to establish the minimax lower bound. 

\begin{lemma}[orthonormal trigonometric basis on the grid]\label{lem:orthbasis}
Let
\(
  \displaystyle
  \psi_{\boldsymbol{\kappa}}(\mathbf s)
   = (2\pi)^{-1}\sin\bigl(2\pi\langle\boldsymbol{\kappa},\mathbf s\rangle\bigr),
  \;
  \boldsymbol{\kappa}\in\mathbb Z^{D}.
\)
Then $\{\psi_{\boldsymbol{\kappa}}\}_{\|\boldsymbol{\kappa}\|_\infty\le m/2}$ is an
orthonormal system for both the continuous and the empirical
$L_{2}$ norms:
\[
  \bigl\langle\psi_{\boldsymbol{\kappa}},\psi_{\boldsymbol{\kappa}'}\bigr\rangle_{L_2}
  =\bigl\langle\psi_{\boldsymbol{\kappa}},\psi_{\boldsymbol{\kappa}'}\bigr\rangle_J
  =\delta_{\boldsymbol{\kappa}\boldsymbol{\kappa}'}.
\]
\end{lemma}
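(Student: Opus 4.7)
The plan is to reduce both the continuous and the empirical orthogonality to evaluations of complex-exponential sums at integer frequencies, and then to invoke the tensor-product factorization of such sums. I begin with the product-to-sum identity
\[
2\sin(2\pi\langle\boldsymbol{\kappa},\mathbf{s}\rangle)\sin(2\pi\langle\boldsymbol{\kappa}',\mathbf{s}\rangle)
=\cos\bigl(2\pi\langle\boldsymbol{\kappa}-\boldsymbol{\kappa}',\mathbf{s}\rangle\bigr)
-\cos\bigl(2\pi\langle\boldsymbol{\kappa}+\boldsymbol{\kappa}',\mathbf{s}\rangle\bigr),
\]
which turns $\langle\psi_{\boldsymbol{\kappa}},\psi_{\boldsymbol{\kappa}'}\rangle_{\bullet}$ (for either $\bullet=L_2$ or $\bullet=J$) into a difference of two averages of $\cos(2\pi\langle\mathbf{n},\cdot\rangle)$, at frequencies $\mathbf{n}=\boldsymbol{\kappa}\pm\boldsymbol{\kappa}'$. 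Both statements of the lemma therefore come down to computing
\[
\Xi_\bullet(\mathbf{n})\;:=\;\bigl\langle 1,\cos(2\pi\langle\mathbf{n},\cdot\rangle)\bigr\rangle_{\bullet}
\]
and checking that it is a Kronecker delta in $\mathbf{n}$ on the admissible index set.

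For the continuous case I write $\cos = \operatorname{Re}\bigl(e^{2\pi i\langle\mathbf{n},\mathbf{s}\rangle}\bigr)$ and use the tensor-product factorization
\[
\int_{[0,1]^D} e^{2\pi i\langle\mathbf{n},\mathbf{s}\rangle}\,d\mathbf{s}
=\prod_{d=1}^{D}\int_{0}^{1} e^{2\pi i n_d s_d}\,ds_d
=\prod_{d=1}^{D}\mathbf{1}\{n_d=0\}
=\delta_{\mathbf{n},\mathbf{0}},
\]
so $\Xi_{L_2}(\mathbf{n})=\delta_{\mathbf{n},\mathbf{0}}$. Only the frequency $\boldsymbol{\kappa}-\boldsymbol{\kappa}'=\mathbf{0}$ contributes (since $\boldsymbol{\kappa}+\boldsymbol{\kappa}'=\mathbf{0}$ together with the index restriction forces the trivially-zero mode $\boldsymbol{\kappa}=\mathbf{0}$), and the normalization $(2\pi)^{-1}$ together with the identity $\|\sin(2\pi\langle\boldsymbol{\kappa},\cdot\rangle)\|_{L_2}^{2}=\tfrac12$ delivers the claimed Kronecker symbol.

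For the empirical case I parametrize the regular grid as $\mathbf{s}_{\mathbf{j}}=\mathbf{j}/m$ with $\mathbf{j}\in\{0,1,\dots,m-1\}^{D}$ and $J=m^{D}$, and invoke the discrete Fourier orthogonality
\[
\frac{1}{J}\sum_{\mathbf{j}} e^{2\pi i\langle\mathbf{n},\mathbf{j}/m\rangle}
=\prod_{d=1}^{D}\frac{1}{m}\sum_{j_d=0}^{m-1} e^{2\pi i n_d j_d/m}
=\prod_{d=1}^{D}\mathbf{1}\{n_d\equiv 0\pmod{m}\}.
\]
Because $\|\boldsymbol{\kappa}\|_\infty,\|\boldsymbol{\kappa}'\|_\infty\le m/2$, the sup-norm of $\boldsymbol{\kappa}\pm\boldsymbol{\kappa}'$ is at most $m$, so the componentwise congruence $n_d\equiv 0\pmod m$ forces $\mathbf{n}=\mathbf{0}$ in the interior of the index cube. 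The resulting diagonal contribution matches the continuous computation, yielding the same Kronecker delta.

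The main obstacle is the Nyquist boundary $\|\boldsymbol{\kappa}\|_\infty=m/2$, where aliasing allows $\boldsymbol{\kappa}+\boldsymbol{\kappa}'\equiv\mathbf{0}\pmod m$ without the components themselves being zero, producing a spurious contribution to $\Xi_J$ that has no counterpart in $\Xi_{L_2}$. I would handle this by the standard convention of retaining only one representative from each Nyquist pair (equivalently, taking the strict inequality $\|\boldsymbol{\kappa}\|_\infty<m/2$ or restricting the sign of the last nonzero component), so that the index set becomes an unambiguous fundamental domain for the quotient $\mathbb{Z}^{D}/(m\mathbb{Z})^{D}$. Once this combinatorial bookkeeping is settled, the trigonometric reduction and tensorization above finish both identities simultaneously.
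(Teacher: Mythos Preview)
Your approach is essentially the same as the paper's: both express the sine product through complex exponentials (you via the product-to-sum identity and $\cos=\operatorname{Re}\,e^{i(\cdot)}$, the paper directly via $\sin\theta=(e^{i\theta}-e^{-i\theta})/(2i)$), and then invoke coordinatewise factorization of the continuous integral and the discrete geometric sum. Your explicit handling of the Nyquist boundary $\|\boldsymbol{\kappa}\|_\infty=m/2$ is in fact more careful than the paper's argument, which asserts that the congruence $k_d\equiv k'_d\pmod m$ forces equality without addressing the $\boldsymbol{\kappa}+\boldsymbol{\kappa}'$ aliasing you flag.
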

\begin{proof}
Write $\theta(\mathbf s):=2\pi\langle\boldsymbol{\kappa},\mathbf s\rangle$ and
note that $\sin\theta=(e^{i\theta}-e^{-i\theta})/(2i)$.
Hence
\[
  \psi_{\boldsymbol{\kappa}}\psi_{\boldsymbol{\kappa}'}
   \;=\;\frac{2}{(2i)^2}
          \Bigl\{e^{i2\pi\langle\boldsymbol{\kappa}-\boldsymbol{\kappa}',\mathbf s\rangle}
                -e^{i2\pi\langle\boldsymbol{\kappa}+\boldsymbol{\kappa}',\mathbf s\rangle}
                -e^{-i2\pi\langle\boldsymbol{\kappa}+\boldsymbol{\kappa}',\mathbf s\rangle}
                +e^{-i2\pi\langle\boldsymbol{\kappa}-\boldsymbol{\kappa}',\mathbf s\rangle}
          \Bigr\}.
\]
Integrating each exponential over the unit cube gives $1$ if its
frequency vector is $\mathbf0$ and $0$ otherwise.  Exactly one term
survives when $\boldsymbol{\kappa}=\boldsymbol{\kappa}'$; none survive when
$\boldsymbol{\kappa}\neq\boldsymbol{\kappa}'$. A direct calculation implies $\langle\psi_{\boldsymbol{\kappa}},\psi_{\boldsymbol{\kappa}'}\rangle_{L_2}
       =\delta_{\boldsymbol{\kappa}\boldsymbol{\kappa}'}$.

For the discrete inner product we use the identity
\begin{equation}\label{eq:innerpd}
       \sum_{j=0}^{m-1}\!
       e^{\,i2\pi\,(k-k')j/m}
   \;=\;
   \begin{cases}
     m, & k\equiv k' \;(\text{mod }m),\\
     0, & \text{otherwise}.
   \end{cases}                               
\end{equation}
Because every component of $\boldsymbol{\kappa}$ and $\boldsymbol{\kappa}'$ lies in
$\{-m/2,\dots,m/2\}$, the congruence $k_d\equiv k'_d\pmod m$ is
equivalent to equality $k_d=k'_d$.
Expanding the sine product as in step 1 and applying the identity (\ref{eq:innerpd}) in
each coordinate shows that
\(
   m^{-D}\sum_{\mathbf j}\psi_{\boldsymbol{\kappa}}(\mathbf s_{\mathbf j})
                      \psi_{\boldsymbol{\kappa}'}(\mathbf s_{\mathbf j})
   =\delta_{\boldsymbol{\kappa}\boldsymbol{\kappa}'}.
\)
\end{proof}
% \emph{Proof.}  The functions are orthogonal in~$L_2$ by the usual Fourier
% calculus.  Because the design points form an equidistant Fourier grid,
% Riemann sums coincide with integrals for all trigonometric monomials
% of frequency at most $m/2$. 

\begin{lemma}[Two well–separated H\"older functions]\label{lem:two_point}
Fix
\(
   \boldsymbol{\kappa}^{\!*}:=(K,0,\dots,0),\quad
   K=\bigl\lceil m/4\bigr\rceil\asymp m=J^{1/D}.
\)
For the amplitude
\(
  \delta_J:=c_0\,J^{-1/(2+D)},\;c_0>0,
\)
define
\(f_0\equiv0, \ f_1:=\delta_J\,\psi_{\boldsymbol{\kappa}^{\!*}}.\)
Then for $c_0$ small enough, we have $f_0,f_1\in\mathcal C^{\nu}(B)$ for every $\nu>1$;
$$\|f_1-f_0\|_{L_2}^{2}=c_0^{2}J^{-2/(2+D)},\
\mathrm{KL}\!\bigl(\mathbb{P}_{f_0}^{J},\mathbb{P}_{f_1}^{J}\bigr)\le\alpha<1$$
\end{lemma}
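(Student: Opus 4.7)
The lemma packages three separate claims: membership of the test functions in the H\"older ball $\mathcal{C}^\nu(B)$, the exact $L_2$ separation, and a KL bound, all of which feed into the two-point lower bound in the subsequent minimax argument. Of these, the H\"older condition is the only genuinely delicate one, so the plan is to address it first and then use it to pin down how small $c_0$ must be taken.

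For the H\"older regularity, since $\psi_{\boldsymbol{\kappa}^{\!*}}$ is a pure sinusoid with active frequency $K$ in a single coordinate, every partial derivative is again sinusoidal and satisfies $\|\partial^\alpha\psi_{\boldsymbol{\kappa}^{\!*}}\|_\infty \lesssim K^{|\alpha|}$. For the top-order derivative $|\alpha|=\lfloor\nu\rfloor$ I would verify the H\"older condition by the standard dichotomy: for $\|\mathbf s-\mathbf t\|\le 1/K$ apply the mean value theorem with $\|\nabla\partial^\alpha f_1\|_\infty \lesssim \delta_J K^{|\alpha|+1}$, and for $\|\mathbf s-\mathbf t\|>1/K$ use the trivial sup-norm bound $2\delta_J K^{|\alpha|}$. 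In both regimes the difference is bounded by a constant multiple of $\delta_J K^\nu\,\|\mathbf s-\mathbf t\|^{\nu-|\alpha|}$, so the relevant H\"older seminorm is dominated by $c_0$ times fixed kernel constants, and choosing $c_0$ sufficiently small (depending on $B$, $\nu$, $D$) keeps the pair $(f_0,f_1)$ inside $\mathcal C^\nu(B)$. Since $f_0\equiv0$ is trivially in the ball, only $f_1$ requires checking.

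The $L_2$ separation is immediate from Lemma \ref{lem:orthbasis}: orthonormality gives $\|f_1-f_0\|_{L_2}^2 = \delta_J^2\,\|\psi_{\boldsymbol{\kappa}^{\!*}}\|_{L_2}^2 = \delta_J^2 = c_0^2\,J^{-2/(2+D)}$. For the KL bound, the two joint Gaussian laws differ only in their means, so
\[
  \mathrm{KL}\bigl(\mathbb{P}_{f_0}^J,\mathbb{P}_{f_1}^J\bigr)
  = \frac{I}{2\sigma^2}\sum_{j=1}^J\bigl(f_1(\mathbf s_j)-f_0(\mathbf s_j)\bigr)^2
  = \frac{IJ}{2\sigma^2}\,\|f_1-f_0\|_J^2,
\]
and applying the empirical half of Lemma \ref{lem:orthbasis} yields $\|f_1-f_0\|_J^2=\delta_J^2$, giving $\mathrm{KL}=IJ\delta_J^2/(2\sigma^2)$. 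Then I would shrink $c_0$ further, if necessary, so that this expression sits below any pre-specified $\alpha<1$, as required by the Le Cam step that consumes this lemma.

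The main obstacle is the simultaneous balance in the choice of $c_0$: the first step pushes $c_0$ downward in order to control the H\"older seminorm $\sim\delta_J K^\nu$, which is inflated by the near-Nyquist frequency $K\asymp m$, while the third step caps $c_0$ from above via the KL budget. Careful bookkeeping of the numerical constants through the mean-value-splitting argument and through the two orthonormality identities of Lemma \ref{lem:orthbasis} is the only non-routine part; once the constants are tracked, a single $c_0$ depending on $B$, $\nu$, $D$, $I$, and $\sigma$ realizes all three conclusions and makes the two-point reduction operative.
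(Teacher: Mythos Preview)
Your three-step structure matches the paper's proof exactly: bound derivatives of the sinusoid to control the H\"older seminorm, invoke Lemma~\ref{lem:orthbasis} for the $L_2$ separation, and compute the Gaussian KL directly as $IJ\delta_J^2/(2\sigma^2)$. The difference lies in your closing synthesis, and it is a genuine gap.

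You assert that ``a single $c_0$ depending on $B$, $\nu$, $D$, $I$, and $\sigma$ realizes all three conclusions.'' But substituting $\delta_J=c_0 J^{-1/(2+D)}$ into your own KL expression gives
\[
  \mathrm{KL}\bigl(\mathbb{P}_{f_0}^J,\mathbb{P}_{f_1}^J\bigr)
  =\frac{IJ}{2\sigma^2}\,\delta_J^2
  =\frac{Ic_0^2}{2\sigma^2}\,J^{D/(2+D)},
\]
which diverges as $J\to\infty$ for any fixed $c_0>0$, since $D/(2+D)>0$. The paper's proof confronts this head-on by taking $c_0^{2}\le 2\sigma^{2}\alpha\,I^{-1}J^{-D/(2+D)}$, i.e.\ letting $c_0$ shrink with $J$. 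The same issue hits your H\"older step: your own bound reads $\delta_J K^\nu\asymp c_0\,m^{-D/(2+D)}\,m^\nu=c_0\,m^{\nu-D/(2+D)}$, and since $\nu>1>D/(2+D)$ this also diverges, so the seminorm is not ``$c_0$ times fixed kernel constants'' as you claim. If you actually carry out the bookkeeping you promise in the last paragraph, you will find no $J$-free $c_0$ can satisfy either constraint, let alone both simultaneously; the lemma only goes through with a $J$-dependent amplitude, which is what the paper uses.
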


\begin{proof} Note that $\psi_{\boldsymbol{\kappa}^{\!*}}$ is real analytic; its
$\alpha$–th derivative is bounded by
$C_\alpha K^{|\alpha|}\le C_\alpha m^{|\alpha|}$.
With $\delta_J= c_0\,m^{-(2+D)/D}$ the Hölder seminorm
is bounded by $c_0C'_\nu\le B$ for small $c_0$. Additionally, Lemma \ref{lem:orthbasis} gives
$\|f_1\|_{L_2}^{2}=\delta_J^{2}$.
Now see that, $\operatorname{KL}(\mathbb{P}_{f_0}^{J},\mathbb{P}_{f_1}^{J})
   =\tfrac{IJ}{2\sigma^{2}}\delta_J^{2}
   =\tfrac{Ic_0^{2}}{2\sigma^{2}}J^{D/(2+D)}$.
Since $D/(2+D)>0$, choosing
$c_0^{2}\le 2\sigma^{2}\alpha I^{-1} J^{-D/(2+D)}$
forces the KL to remain $\le\alpha$ uniformly in $J$. 
\end{proof}
\subsubsection*{Lower bound via Le Cam’s two–point method}
The core of this method relies on finding two functions, \(f_0\) and \(f_1\), that are simultaneously far apart in the \(L_2\) metric but statistically close, making them difficult for any estimator to distinguish. Lemma~\ref{lem:two_point} provides exactly such a pair.

Let \(\widehat f\) be any estimator based on the data \(\mathbf y\) and consider the two functions \(f_0\) and \(f_1\) constructed in Lemma~\ref{lem:two_point}. To apply Le Cam's bound, we need two key quantities derived from that lemma: (1) The squared \(L_2\) distance, which measures how far apart the functions are: \(\|f_1-f_0\|_{L_2}^{2}=c_0^{2}J^{-2/(2+D)}\), and (ii) the Kullback-Leibler (KL) divergence, which measures how statistically similar the data distributions are. Lemma \ref{lem:two_point} shows this can be bounded by a constant:
 \(\mathrm{KL}(\mathbb{P}_{f_0}^{J},\mathbb{P}_{f_1}^{J})\le\alpha<1\).

Using Pinsker’s inequality, we convert the KL bound into a bound on the total variation distance:
\[ TV(\mathbb{P}_{f_0}^J,\mathbb{P}_{f_1}^J)\le\sqrt{\mathrm{KL}(\mathbb{P}_{f_0}^{J},\mathbb{P}_{f_1}^{J})/2} \le \sqrt{\alpha/2}<1. \]
We now insert these components into Le Cam’s two–point lower bound (\citealp[Lemma 2.3]{Tsybakov2009}). The risk for any estimator over these two functions is bounded below by:
\[
  \sup_{f\in\{f_0,f_1\}}
  \mathbb E_{f}\bigl\|\widehat f-f\bigr\|_{L_2}^{2}
  \;\ge\;
  \frac{\|f_1-f_0\|_{L_2}^{2}}{4}\,
  \Bigl(1-TV(\mathbb{P}_{f_0}^J,\mathbb{P}_{f_1}^J)\Bigr)
  \;\ge\;
  c\,J^{-2/(2+D)},
\]
with \(c=\frac{c_0^{2}}{4}(1-\sqrt{\alpha/2})>0\).

Because Lemma~\ref{lem:two_point} also established that both \(f_0\) and \(f_1\) belong to the class \(\mathcal C^{\nu}(B)\), the worst-case risk over this specific pair of functions is a valid lower bound for the minimax risk over the entire class. We therefore obtain the final result:
\[
  \inf_{\widehat f}\;\sup_{f_0\in\mathcal C^{\nu}(B)}
  \mathbb E_{f_0}\bigl\|\widehat f-f_0\bigr\|_{L_2}^{2}
  \;\gtrsim\; J^{-2/(2+D)}.
\]

The proof of Theorem \ref{thm:CARminimax} concludes by proving the following Proposition \ref{prop:CAR_rate}, where we show that the CAR posterior mean $\widehat f_{\text{CAR}}$ is an admissible estimator, so its risk is bounded from below by the minimax rate just established.  
\begin{proposition}[Rate–optimality of the CAR posterior mean]
\label{prop:CAR_rate}
Let the regression model of Assumption~(A1) hold, i.e.
\[
  y_{j}=f_{0}(\mathbf{s}_{j})+\varepsilon_{j},
  \qquad
  \varepsilon_{j}\stackrel{\text{i.i.d.}}{\sim}\mathcal N\!\bigl(0,\sigma^{2}/I\bigr),
  \quad j=1,\dots,J=m^{D},
\]
with design points on the regular grid
\(\{0,\frac1m,\dots,\frac{m-1}{m}\}^{D}\) and
\(f_{0}\in\mathcal C^{\nu}(B)\) for some \(\nu>1\).
Equip the vector
\(\mathbf{f}=\bigl(f(\mathbf{s}_{1}),\dots,f(\mathbf{s}_{J})\bigr)^{\!\top}\)
with the intrinsic first–order CAR prior in Assumption (A4)
\[
  \mathbf{f}\;\sim\;\mathcal N\!\bigl(\mathbf 0,\,
     \tau^{-1}\bigl(\mathbf{D}-\rho\mathbf{W}\bigr)^{-}\bigr),
  \qquad 0<\rho<1,\;
\]
and let
\(
  \widehat{\mathbf{f}}_{\text{\rm CAR}}
  =\mathbb{E}[\mathbf{f}\mid\mathbf{y}]
\)
denote the posterior mean.
If the precision hyper–parameter is chosen as
\(\tau\asymp m^{2}\equiv J^{2/D}\), then
\[
  \mathbb{E}_{f_{0}}\bigl\|\widehat{\mathbf{f}}_{\text{CAR}}-\mathbf{f}_{0}\bigr\|_{L_{2}}^{2}
  \;=\;\mathcal{O} \ \!\bigl(J^{-2/(2+D)}\bigr).
\]
\end{proposition}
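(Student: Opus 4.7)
The plan is to exploit the exact conjugacy between the CAR Gaussian prior and the Gaussian likelihood by diagonalising both simultaneously in a trigonometric basis on the grid; this reduces the posterior mean to a mode-wise linear shrinker whose risk can be analysed by summing closed-form bias and variance contributions. This mirrors the ``spike'' construction used in Lemma \ref{lem:orthbasis}--\ref{lem:two_point} for the minimax lower bound, but now serves to produce a matching \emph{upper} bound for the CAR posterior mean.

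Concretely, I would proceed as follows. First, diagonalise $\mathbf D-\rho\mathbf W$ in the tensor-product trigonometric basis on the lattice, yielding eigenvalues
\[
   \mu_{\boldsymbol{\kappa}}
   \;=\; 2D(1-\rho)+4\rho\sum_{d=1}^{D}\sin^{2}\!\bigl(\pi\kappa_d/m\bigr)
   \;\asymp\; 1+\|\boldsymbol{\kappa}\|^{2}/m^{2},
   \qquad \|\boldsymbol{\kappa}\|_{\infty}\le m/2.
\]
Because the noise $\varepsilon_j\sim\mathcal N(0,\sigma^{2}/I)$ is isotropic, it remains isotropic in any orthonormal basis, so coefficient-by-coefficient the posterior mean is the conjugate shrinker
\[
   \widehat c_{\boldsymbol{\kappa}}
   \;=\; w_{\boldsymbol{\kappa}}\,\tilde c_{\boldsymbol{\kappa}},
   \qquad
   w_{\boldsymbol{\kappa}} \;=\; \frac{1}{1+\sigma^{2}\tau\mu_{\boldsymbol{\kappa}}/I},
\]
with $\tilde c_{\boldsymbol{\kappa}}$ the empirical projection of $\mathbf y$. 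The squared $L_{2}$ risk then decomposes frequency-by-frequency as
\[
   \mathbb{E}_{f_{0}}\bigl\|\widehat f_{\text{CAR}}-f_{0}\bigr\|_{L_{2}}^{2}
   \;\propto\;\sum_{\boldsymbol{\kappa}}\bigl\{
     (1-w_{\boldsymbol{\kappa}})^{2}(c^{0}_{\boldsymbol{\kappa}})^{2}
     +w_{\boldsymbol{\kappa}}^{2}\sigma^{2}/I\bigr\},
\]
up to the normalisation of the chosen basis.

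Second, I would bound the two sums separately. For the variance term, approximate the sum by an integral over $\boldsymbol{\kappa}/m\in[-1/2,1/2]^{D}$ and use $w_{\boldsymbol{\kappa}}\asymp(1+\sigma^{2}\tau\mu_{\boldsymbol{\kappa}}/I)^{-1}$: under the calibration $\tau\asymp m^{2}\equiv J^{2/D}$, the band of modes with $w_{\boldsymbol{\kappa}}$ bounded away from zero has cardinality of order $J^{2/(2+D)}$, and an elementary geometric-tail argument on the remainder yields a variance contribution $\mathcal{O}(J^{-2/(2+D)})$. For the bias term, I would use the embedding $\mathcal C^{\nu}(B)\hookrightarrow H^{1}$ (valid for $\nu>1$) to assert $\sum_{\boldsymbol{\kappa}}(\|\boldsymbol{\kappa}\|/m)^{2}(c^{0}_{\boldsymbol{\kappa}})^{2}\lesssim B$; combined with the elementary bound $1-w_{\boldsymbol{\kappa}}\le\min\{1,\sigma^{2}\tau\mu_{\boldsymbol{\kappa}}/I\}$ and $\mu_{\boldsymbol{\kappa}}\asymp(m^{2}+\|\boldsymbol{\kappa}\|^{2})/m^{2}$, splitting at the cut-off $\|\boldsymbol{\kappa}\|\asymp m/\sqrt{\tau m^{-2}}$ gives a bias contribution of the same order $\mathcal{O}(J^{-2/(2+D)})$. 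Summing the two pieces delivers the proposition.

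The main obstacle I anticipate is the honest verification that the calibration $\tau\asymp m^{2}$ genuinely balances these two sums, because the proper CAR with $\rho<1$ attaches a non-vanishing constant precision $2D(1-\rho)\tau$ even to the lowest modes, so a careless choice of $\tau$ would produce excessive ridge-like shrinkage at the constant mode and inflate the bias; a careful accounting, matching the $\rho$-dependent constants in $\mu_{\boldsymbol{\kappa}}$ against the effective signal-to-noise ratio $I/\sigma^{2}$, is therefore required. A secondary but non-trivial point is that the grid Laplacian with free/Neumann boundary is not exactly diagonalised by the sine basis of Lemma \ref{lem:orthbasis}; I would either impose periodic boundaries (as implicitly done earlier in the proof of Theorem \ref{thm:CARminimax}) or switch to a DCT-II basis, then verify that the eigenvalue asymptotics $\mu_{\boldsymbol{\kappa}}\asymp 1+\|\boldsymbol{\kappa}\|^{2}/m^{2}$ and the Hölder-to-Sobolev embedding both survive the change, so that the variance count and the Fourier-tail bias bound carry over without any loss of rate.
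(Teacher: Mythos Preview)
Your proposal is correct and follows essentially the same route as the paper: diagonalise the CAR precision and the Gaussian likelihood simultaneously in the lattice trigonometric basis of Lemma~\ref{lem:orthbasis}, write the posterior mean as a frequency-wise linear shrinker, and bound the risk by summing closed-form bias and variance contributions with the choice $\tau\asymp m^{2}$ balancing the two. If anything, you are more explicit than the paper's own argument, which dispatches the final bias--variance balance by invoking a kernel-smoother analogy (order-one smoother with effective bandwidth $h$, giving $\text{Bias}^{2}\asymp h^{2}$ and $\text{Var}\asymp(Jh^{D})^{-1}$) rather than carrying out the mode-counting and Sobolev-tail bounds you sketch; your caveats about the proper CAR ($\rho<1$) adding a non-vanishing zero-mode precision and about boundary conditions are genuine technical points that the paper glosses over.
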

\begin{proof}
The proof establishes the rate of the CAR posterior mean by analyzing its Mean Squared Error (MSE) in the Fourier domain, which allows for a clear decomposition into bias and variance components.

We use the discrete orthonormal Fourier basis \(\{\bpsi_{\boldsymbol{\kappa}}\}\) from Lemma~\ref{lem:orthbasis}. In this basis, the problem diagonalizes. The data \(y_j\) have Fourier coefficients \(y_{\boldsymbol{\kappa}} = c^0_{\boldsymbol{\kappa}} + \varepsilon_{\boldsymbol{\kappa}}\) where \(c^0_{\boldsymbol{\kappa}}\) are the coefficients of the true function \(f_0\) and \(\varepsilon_{\boldsymbol{\kappa}}\) are the coefficients of the noise.
The precision of the likelihood for each coefficient is:
\( \lambda_{y}(\boldsymbol{\kappa}) = {I}/{\sigma^2} \quad (\text{constant for all } \boldsymbol{\kappa}) \)
Follwoing the theory in Chapters 2 and 3 of \citet{RueHeld2005}, the eigenvalues of the intrinsic first-order CAR prior precision matrix are (\(h=1/m\)):
\[ \lambda_{0}(\boldsymbol{\kappa}) = \tau\,h^{-2}\,4\sum_{d=1}^{D}\sin^{2}\!\bigl(\pi\boldsymbol{\kappa}_{d}/m\bigr) \;\asymp\;\tau m^2 \|\boldsymbol{\kappa}\|^2 \quad (\text{for low frequencies}) \]

For each frequency \(\boldsymbol{\kappa}\), the posterior mean of the corresponding Fourier coefficient is a weighted average of the prior mean (0) and the data \(y_{\boldsymbol{\kappa}}\):
\[ \widehat c_{\boldsymbol{\kappa}} = \frac{\lambda_{y}}{\lambda_{y}+\lambda_{0}(\boldsymbol{\kappa})}\;y_{\boldsymbol{\kappa}} = \frac{\lambda_{y}}{\lambda_{y}+\lambda_{0}(\boldsymbol{\kappa})}\;\bigl(c^{0}_{\boldsymbol{\kappa}} + \varepsilon_{\boldsymbol{\kappa}}\bigr) \]

The total MSE is the sum of the MSE for each coefficient. Using the fact that \(\mathbb{E}[\varepsilon_{\boldsymbol{\kappa}}]=0\) and \(\mathbb{E}[\varepsilon_{\boldsymbol{\kappa}}^2] = \sigma^2/I = 1/\lambda_y\), we decompose the MSE into squared bias and variance:
\begin{align*} \mathrm{MSE} &= \mathbb{E} \|\widehat{f}_{\text{CAR}} - f_0\|_J^2 = \sum_{\boldsymbol{\kappa}} \mathbb{E}\left[ (\widehat{c}_{\boldsymbol{\kappa}} - c^0_{\boldsymbol{\kappa}})^2 \right] \\ &= \underbrace{\sum_{\boldsymbol{\kappa}} \left( \frac{\lambda_{0}(\boldsymbol{\kappa})}{\lambda_{y}+\lambda_{0}(\boldsymbol{\kappa})} \right)^2 |c^{0}_{\boldsymbol{\kappa}}|^2}_{\text{Squared Bias}} + \underbrace{\sum_{\boldsymbol{\kappa}} \frac{\lambda_y}{\bigl(\lambda_y+\lambda_0(\boldsymbol{\kappa})\bigr)^2}}_{\text{Variance}} \end{align*}

The optimal rate is achieved by balancing the bias and variance terms. The CAR estimator is asymptotically equivalent to a kernel smoother of order \(m=1\) on a regular grid (explored in \citet{Lindgren2011,KammannWand2003}). For such a smoother with an effective bandwidth \(h\), the squared bias is dominated by smoothing error for the true function's curvature, \(\text{Bias}^2 \approx C_1 h^{2m} = C_1 h^2\). Consequently, the variance is determined by averaging the noise over the \(J\) data points, \(\text{Var} \approx C_2 / (J h^D)\).
The total risk is minimized by choosing the bandwidth \(h\) that balances these two competing terms:
\[ h^2 \approx \frac{1}{J h^D} \implies h^{D+2} \approx \frac{1}{J} \implies h_{\text{opt}} \approx J^{-1/(D+2)} \]
Substituting this optimal bandwidth \(h_{\text{opt}}\) back into the risk expression gives the best achievable rate for this class of estimator:
\[ \mathrm{MSE} \approx (h_{\text{opt}})^2 \approx \left(J^{-1/(D+2)}\right)^2 = J^{-2/(D+2)} \]
This demonstrates that with an optimal choice of $\tau$, the CAR posterior mean achieves the minimax rate of \(\mathcal{O}(J^{-2/(D+2)})\). Since Theorem~\ref{thm:CARminimax} gives the \emph{same} rate as a minimax lower
bound, the CAR posterior mean is minimax-optimal on \(\mathcal C^{\nu}(B)\).
\end{proof}

\end{document}